\newcommand{\interface}{\Sigma}
\newcommand{\stateset}{\mathbb{I}}
\newcommand{\bindinglabel}{\mathbb{N}}
\newcommand{\strutfrac}[2]{\frac{\strut #1}{\strut #2}}
\newcommand{\rlhs}{lhs}
\newcommand{\sat}{\models}
\newcommand{\embed}[3]{#1 \lhd_{#3} #2} 
\newcommand{\auto}[1]{\embedset{#1}{#1}}
\newcommand{\nauto}[1]{|\auto{#1}|}
\newcommand{\logicand}{\wedge}
\newcommand{\embedset}[2]{[#1,#2]}
\def\gr#1{\hat r} 
\def\emph#1{\textit{#1}}
\def\rar{\rightarrow}
\def\Rar{\Rightarrow}
\def\ga{\gamma}
\def\lrar{\leftrightarrow}
\def\ga{\gamma}
\def\xAbs{\tilde{x}}
\def\lAbs{\tilde{l}}
\def\wAbs{\tilde{w}}
\def\WAbs{\tilde{\cal W}}
\def\PI{\cal P}
\def\L{{\mathcal L}}
\def\S{{\mathcal X}}
\def\Tauu{\tau_{\IReal}}
\def\tildtauu{\tilde{\tau}_{\IReal}}
\def\tildetset{\mathcal{T}_{\IReal}(\tilde{\mathcal{W}})}
\def\tset{\mathcal{T}_{\IReal}(\mathcal{W})}
\def\R{{\mathbb R}}
\def\ra{\rightarrow}
\newcommand{\ptraceTanja}[1][k]{\pi_0(x_0) \prod_{i=1}^k \frac{w(x_{i-1},l_i,x_i)}{a(x_{i-1})}
\left (e^{-a(x_{i-1})\cdot \inf(I_i)}-e^{-a(x_{i-1})\cdot \sup(I_i)}\right)}
\def\longrightharpoonup{\relbar\joinrel\rightharpoonup}
\def\longleftharpoondown{\leftharpoondown\joinrel\relbar}
\def\longrightleftharpoons{
  \mathop{
    \vcenter{
      \hbox{
    \ooalign{
      \raise1pt\hbox{$\longrightharpoonup\joinrel$}\crcr
      \lower1pt\hbox{$\longleftharpoondown\joinrel$}
    }
      }
    }
  }
}
\newcommand{\emptystate}{\epsilon}
\newcommand{\emptybinding}{\epsilon}
\newcommand{\emptylist}{\varepsilon}
\newcommand{\emptyagent}{\emptyset}
\newcommand{\agentname}{\mathcal {A}}
\newcommand{\sitename}{\mathcal {S}}
\newcommand{\transarrow}[1][]{%
\def\arga{#1}%
\def\argb{\sharp}%
\ifx\arga\argb
\rightsquigarrow
\else%
\rightarrow_{#1}
\fi}
\newcommand{\trans}[1][]{%
\def\argI{#1}%
\transrelay
}
\newcommand{\transrelay}[4][]{#2 \overset{#3%
{\def\arga{}
\ifx\arga\argI
\else%
{,\argI}%
\fi}%
}{\transarrow[#1]}#4}
\def\RB{\mathbb R}
\def\RR{\mathbb R}
\newcommand{\IReal}{{ \mathbb{IR} }}
\def\R{{\mathbf R}}
\def\Pr{{\mathrm Pr}}
\def\ra{\rightarrow}
\newenvironment{myitemize}{\begin{itemize}\setlength{\parsep}{0pt}\setlength{\itemsep}{0pt}}{\end{itemize}}
\newenvironment{myenumerate}{\begin{enumerate}\setlength{\parsep}{0pt}\setlength{\itemsep}{0pt}}{\end{enumerate}}
\newtheorem{definition}{Definition}
\newtheorem{thm}{Theorem}
\newtheorem{lemma}{Lemma}
\title{Lumpability Abstractions of Rule-based Systems\thanks{J{\'e}r{\^o}me Feret's contribution was partially supported by the \textsc{AbstractCell} ANR-Chair of Excellence. Heinz Koeppl acknowledges the support from the Swiss National Science Foundation, grant no. 200020-117975/1. Tatjana Petrov acknowledges the support from SystemsX.ch, the Swiss Initiative in Systems Biology.}}
\author{Jerome Feret
\institute{ LIENS (INRIA/{\'E}NS/CNRS) \\ Paris, France}
\email{feret@ens.fr}
\and
Thomas Henzinger 
\institute{Institute and Science of Technology\\
Vienna, Austria}
\email{thenzinger@ist.ac.at}
\and
Heinz Koeppl  
\institute{School of Computer and Communication Sciences\\EPFL\\
Lausanne, Switzerland}
\email{heinz.koeppl@epfl.ch}
\and
Tatjana Petrov 
\institute{School of Computer and Communication Sciences\\EPFL\\
Lausanne, Switzerland}
\email{tatjana.petrov@epfl.ch}
}
\begin{document}
\maketitle

\begin{abstract}
The induction of a signaling pathway is characterized by transient complex formation and mutual posttranslational modification of proteins. 
To faithfully capture this combinatorial process in a mathematical model is an important challenge in systems biology. 
Exploiting the limited context on which most binding and modification events are conditioned, attempts have been made to reduce the combinatorial complexity by quotienting the reachable set of molecular species, into species aggregates while preserving the deterministic semantics of the thermodynamic limit. 
Recently we proposed a quotienting that also preserves the stochastic semantics and that is complete in the sense that the semantics of individual species can be recovered from the aggregate semantics. 
In this paper we prove that this quotienting yields a sufficient condition for \emph{weak lumpability} and that it gives rise to a \emph{backward Markov bisimulation} between the original and aggregated transition system.     
We illustrate the framework on a case study of the EGF/insulin receptor crosstalk.
\end{abstract}

\section{Introduction}
\label{intro}

Often a few elementary events of binding and covalent modification \cite{walshct_2006} in a biomolecular reaction system give rise to a combinatorial number of non-isomorphic reachable species or complexes \cite{hlavacekws_2003,hlavacek2006rms}. Instances of such systems are signaling pathway, polymerizations involved in cytoskeleton maintainance, the formation of transcription factor complexes in gene-regulation. 

For such biomolecular systems, traditional chemical kinetics face fundamental limitations, that are related to the question of how biomolecular events are represented and translated into a mathematical model \cite{mcquarrieda_1967_1}. More specifically, chemical reactions can only operate on a collection of fully specified molecular species and each such species gives rise to one differential equation, describing the rate of change of that species' concentration. Many combinatorial systems do not permit the enumeration of all molecular species and thus render their traditional differential description prohibitive. However, even if one could enumerate them, it remains questionable whether chemical reactions are the appropriate way to represent and to reason about such systems.

As the dynamics of a biomolecular reaction mixture comes about through the repeated execution of a few elementary events one may wonder about the effective degrees of freedom of the reaction mixture's dynamics. If the velocity of all events -- or their probabilities to occur per time-unit per instance -- are different for all complexes (w.r.t. modification) and pairs of complexes (w.r.t. binding) to which the events can apply to, then the degrees of freedom would equal to the number of molecular species. However, due to the local nature of physical forces underlying molecular dynamics, the kinetics of most events appear to be ignorant with respect to the global configuration of the complexes they are operating on. More provocatively, one may say that even if there would be variations of kinetics of an event from one context to another, experimental biology does not -- and most likely never will -- have the means to discern between all different contexts. For instance, fluorescence resonance energy transfer (FRET), may report on a specific protein-binding event and even its velocity, however we have no means to determine whether the binding partners are already part of a protein complex -- not to speak of the composition and modification state of these complexes. To this end, molecular species remain elusive and appear to be inappropriate entities of descriptions.

To align with the mentioned experimental insufficiencies and with the underlying biophysical locality, rule-based or agent-based descriptions were introduced as a framework to encode such reaction mixtures succinctly and to enable their mathematical analysis \cite{kappa,blinov2004}. Rules exploit the limited context on which most elementary events are conditioned. They just enumerate that part of a molecular species that is relevant for a rule to be applicable. Thus, in contrast to chemical reactions, rules can operate on a collection of partially specified molecular species. Recently, attempts have been made to identify the set of those partially specified species that allow for a self-consistent description of the rule-set's dynamics \cite{pnas,conzelmann2008}. Naturally, as partially specified species -- or \emph{fragments} -- in general encompass many fully specified species, the cardinality of that set is less than of the set of molecular species. These approaches aim to obtain a self-consistent fragment dynamics based on ordinary differential equations. It represents the dynamics in the thermodynamic limit of stochastic kinetics when scaling species multiplicities to infinity while maintaining a constant concentration (multiplicity per unit volume) \cite{kurtztg_1971_1}. In many applications in cellular biology this limiting dynamics is an inappropriate model due to the low multiplicities of some molecular species -- think of transcription factor - DNA binding events. We recently showed that the obtained \emph{differential fragments} cannot be used to describe the finite volume case of stochastic kinetics \cite{journal}. Exploiting statistical independence of events we instead derived \emph{stochastic fragments} that represent the effective degrees of freedom in the stochastic case. Conceptually, the procedure replaces the Cartesian product by a Cartesian sum for statistically independent states. In contrast to the differential case, stochastic fragments have the important property that the sample paths of molecular species can be recovered from that of partially specified species. 

We believe that interdisciplinary fields, such as systems biology, can move forward quickly enough only if the well-established knowledge in each of the disciplines involved is exploited to its maximum, i.e.~if the standardized, well-established theories are recognized and re-used.  For that reason, in this paper we translate our abstraction method (\cite{journal}) into the language of well-established contexts of abstraction for probabilistic systems -- lumpability and bisimulation. Lumpability is mostly considered from a theoretical point of view in the theory of stochastic processes \cite{lumpCommutCTMCExact,unif1,unif2,weakLumpDTMC, weakLumpCTMC,buchholz_lump}. A Markov chain is lumpable with respect to a given aggregation (quotienting) of its states, if the lumped chain preserves the Markov property \cite{KS60:lump}. A sound aggregation for \textit{any} initial probability distribution is referred to as \emph{strong} lumpability, while otherwise it is termed \emph{weak} lumpability \cite{buchholz_lump,Sokolova03onrelational}. Approximate aggregation techniques for Markov chains of biochemical networks are discussed in \cite{VerenaSW}.  Probabilistic bisimulation was introduced as an extension to classic bisimulation in \cite{LS89:ProbBis}. It is extended to continuous-state and continuous-time in \cite{PP:BisimulLMP02} and, for the discrete-state case, to weak bisimulation \cite{Hermanns99:weakbisimulation}. For instance, in \cite{PP:BisimulLMP02} the authors use bisimulation of labelled Markov processes, the state space of which is not necessarily discrete, and they provide a logical characterization of probabilistic bisimulation. Another notion of weak bisimulation was recently introduced in \cite{Doyen:EMC08}. 
Therein two labeled Markov chains are defined to be equivalent if every finite sequence of observations has the same probability of  occurring in the two chains. Herein we recognize the sound aggregations of \cite{journal} as a form of \emph{backward Markov bisimulations} on weighted labeled transition systems (WLTS), and we show it to be equivalent to the notion of \emph{weak lumpability} on Markov chains.    

The remaining part of the paper is organized as follows. In the Section~\ref{sec:prel}, we introduce weighted labeled transition systems (WLTS) and we assign it the trace density semantics of a continuous-time Markov chain (CTMC). Moreover, we define the Kappa language, and we assign a WLTS to a Kappa specification. Based on the notion of the annotated contact map we briefly summarize in Section~\ref{sec:reduction} the general procedure to compute stochastic fragments, as it is offered in \cite{journal}. In Section~\ref{sec:abstraction}, we introduce the characterizations of sound and complete abstractions on WLTS as a backward Markov bisimulation. Moreover, we define it being equivalent to the weak lumpability on Markov chains. Finally, we provide in Section~\ref{sec:crosstalk} results for the achieved dimensionality reduction for a rule-based model of the crosstalk between the EGF/insulin signaling pathway \cite{conzelmann2008}. This mechanistic model comprises $76$ rules giving rise to $42956$ reactions and $2768$ molecular species. 

\section{Preliminaries}
\label{sec:prel}

The stochastic semantics of a biochemical network is modelled by a continuous-time Markov chain (CTMC). The main object that we will use in the analysis is the weighted labelled transition system (WLTS) on a countable state space. We will assign a WLTS to a given Kappa specification, and we manipulate that object when reasoning about abstractions. 

\subsection{CTMC}

We will observe the CTMC that is generated by the weighted labelled transition system (WLTS) on a countable state space. We define the CTMC of a WLTS, by defining the Borel $\sigma$-algebra containing all cylinder sets of traces \cite{Kemeny1976} that can occur in the system, and the corresponding probability distribution among them. We also introduce the standard notation of a rate matrix, which we will use when analysing the lumpability and bisimulation properties in Sec.~\ref{sec:abstraction}.

\begin{definition} (WLTS) \label{dfn1}
A weighted-labelled transition system ${\cal W}$ is a tuple $(\S,\L,w,\pi_0)$, where 
\begin{myitemize}
\item $\S$ is a countable state space;
\item $\L$ is a set of labels;
\item $w:\S \times \L \times \S\ra {\RB}_0^+$ is the weighting function that maps two states and a label to a real value; 
\item $\pi_0:\S\ra [0,1]$ is an initial probability distribution.
\end{myitemize}
We assume that the label fully identifies the transition, i.e.~for any $x\in \S$ and $l\in\L$, there is at most one $x'\in \S$, such that $w(x,l,x')>0$. Moreover, we assume that the system is finitely branching, in the sense that (i) the set $\{x\in\S \;\mid\; \pi_0(x)>0\}$ is finite, and (ii) for arbitrary $\hat x\in \S$, the set $\{(l,x')\in\L\times\S\;\mid\; w(\hat x,l,x')>0\}$ is finite.

The \emph{activity} of the state $x_i$, denoted $a:\S\ra {\RB}_0^+$ is the sum of all weights originating at $x_i$, i.e.~
\begin{align*}
a(x_i): = \sum \{w(x_i, l,x_j) \;\mid\; x_j\in \S\!,\; l\in\L\} 
.\end{align*}
\end{definition}

The definition of a WLTS implicitely defines a \emph{transition relation} $\rightarrow\subseteq \S \times \S$, such that $(x_i,x_j)\in \rightarrow$, if and only if there exists a non-zero transition from state $x_i$ to state $x_j$, i.e.~the total weight over all labels is strictly bigger then zero, written $\sum \{w(x_i,l,x_j) \;\mid\; l\in \L \} > 0$. Moreover, we can differentiate the \emph{initial set of states} ${\cal I}\subseteq \S$, such that their initial probabilities are positive, i.e.~${\cal I} = \{x\in {\S} \;\mid\; \pi_0(x) > 0\}$.

\begin{definition}(Rate matrix of a WLTS) \label{dfn2}
Given a WLTS ${\cal W} = (\S,\L,w,\pi_0)$, we assign it the CTMC rate matrix $R:\S\times \S \ra {\RB}$, given by 
$R(x_i,x_j) =  \sum\{w(x_i,l,x_j) \;\mid\; {l\in \L}\}$.
\end{definition}

The consequence is that we do not enforce $R(x_i,x_i) = -\sum\{R(x_i,x_j)\;\mid\;i\neq j\}$, as it is usual for the generator matrix of CTMCs. This however does not affect the transient, not the steady-state behavior of the CTMC \cite{baier03}. We do so for the following reason. When abstracting the WLTS by partitioning the state space, we get another WLTS. If the two states $x$ and $x'$ which have a transition between each other were aggregated in the same partition class $\xAbs$, it will result as a prolongation of the residence time in the abstract state $\xAbs$, i.e.~we will have a self-loop in the abstract WLTS. 

If we refer to the generated stochastic Markov process, written as a continuous-time random variable $\{X_t\}_{t\in {\RB}_0^+}$, over the countable state space $\S$. We write $\Pr(X_t = x_i)$, the probability that the process takes the value $x_i$ at time point $t$. It thus holds that $\Pr(X_0=x_i) = \pi_0(x_i)$, and $\Pr(X_{t+\mathrm{d}t} =x_j\;\mid\;X_t=x_i)=R(x_i,x_j) \mathrm{d}t$ when $i\neq j$, whereas $\Pr(X_{t+\mathrm{d}t} =x_i\;\mid\;X_t=x_i)= R(x_i,x_i)\mathrm{d}t + (1- \sum \{R(x_i,x_{j'})\mathrm{d}t \;\mid\; {x_{j'}\in\S} \})$, which gives after simplification: $\Pr(X_{t+\mathrm{d}t} =x_i\;|\;X_t=x_i) =1- \sum \{R(x_i,x_{j'})\mathrm{d}t \;\mid\; {j'\neq i} \}$.

We define the cylinder sets of traces that can be observed in the system ${\cal W}$. By observing the trace at a certain time point, we mean observing the sequence of visited states, labels that were assigned to the executed transitions, and time points of when the transition happened.

\begin{definition} (A trace of a WLTS) \label{dfn3}
Let us observe the  WLTS ${\cal W}=(\S,\L,w,\pi_0)$ and its CTMC.
Given a number $k$ in ${\mathbb N}$,
we define a \emph{trace} of length $k$ as
$\tau\in (\S\times \L\times {\RB}_0^+)^k\times \S$, written
\begin{equation*}
\tau = \trans[t_1]{x_0}{l_1}{x_1}  \ldots \trans[t_1+...+t_k]{x_{k-1}}{l_k}{x_k}.
\end{equation*}
If the trace $\tau$ is such that 
(i) $\pi_0(x_0)>0$, and 
(ii) for all $i$, $0\leq i\le k$, we have that $w(x_i,l_i,x_{i+1}) > 0$, then we say that $\tau$ belongs to the set of traces of ${\cal W}$, and we write it 
$\tau\in {\cal T}({\cal W})$.
\end{definition}
The `time stamps' on each of the transitions denote intuitively the absolute time of the transition, from the moment when the system was started ($t=0$).
We cannot assign the probability distribution to the traces in ${\cal T}(\cal W)$, since the probability of any such trace is zero. We thus introduce the cylinder set of traces over intervals of times. 
\begin{definition}  (Cylinder set of traces) \label{dfn4}
If ${\IReal}$ is the set of all nonempty intervals in ${\RB}_0^+$, then we define the \emph{cylinder} set of traces
$\Tauu\in  (\S\times \L\times \IReal)^k\times \S$, such that:
\begin{align}
\Tauu = \trans[I_1]{x_0}{l_1}{x_1}  \ldots \trans[I_k]{x_{k-1}}{l_k}{x_k} 
\label{traceB}
\end{align}
denotes the set of all traces $\tau=\trans[t_1]{x_0}{l_1}{x_1}  \ldots \trans[t_1+...+t_k]{x_{k-1}}{l_k}{x_k}$, such that $t_i\in I_i$, $1\leq i\leq k$.
If the cylinder of traces $\Tauu$ is such that $\pi_0(x_0)>0$, and for all $i=0,...,k-1$, 
we have that $w(x_i,l_i,x_{i+1}) > 0$, then we say that $\Tauu$ belongs to the cylinder set of traces of ${\cal W}$, and we write 
$\Tauu \in \tset$. 
\end{definition}
Let $\Omega(\tset)$ be the smallest Borel $\sigma$-algebra that contains all the cylinder sets of traces in $\tset$.
We define a probability measure over $\Omega(\tset)$ in the following way.
\begin{definition} (Trace density semantics on a WLTS)  
Given a WLTS $(\S,\L,w,\pi_0)$, and a number $k$ in ${\mathbb N}$, the probability of the cylinder set of traces 
$\Tauu \in \tset$, specified as in expression ($\ref{traceB}$), 
is given by:
\begin{align*}
\pi(\Tauu) =\pi( \trans[I_1]{x_0}{l_1}{x_1}  \ldots \trans[I_k]{x_{k-1}}{l_k}{x_k} )=\ptraceTanja.
\end{align*}
\label{dfn5}
\end{definition}
Note that $\int_{I_i} a(x_{i-1})e^{-a(x_{i-1}){\cdot}t} {\mathrm{d}t}=e^{-a(x_{i-1})\cdot \inf(I_i)}-e^{-a(x_{i-1})\cdot \sup(I_i)}$ is the probability of exiting the state $x_{i-1}$ in a time interval $I_{i-1}$, since the probability density function of the residence time of $x_{i-1}$ is equal to $a(x_{i-1})e^{-a(x_{i-1})}$.

\subsection{Kappa} 
\label{sec:kappa}

We present Kappa in a process-like notation. We start with an operational semantics, then define the stochastic semantics of a Kappa model.

We assume a finite set of agent names $\agentname$, representing different kinds of proteins; a finite set of sites $\sitename$, corresponding to protein domains; a finite set of internal states $\stateset$, and $\interface_\iota{}$,$\interface_{\lambda}$ two signature maps from $\agentname$ to $\wp(\sitename)$, listing the domains of a protein which can bears respectively an internal state and a binding state. We denote by $\interface$ the signature map that associates to each agent name $A\in\agentname$ the combined interface $\interface_\iota(A)\cup\interface_\lambda(A)$.

\begin{definition} (Kappa agent)
A \emph{Kappa agent} $A(\sigma)$ is defined by its type $A\in \agentname$ and  its \emph{interface} $\sigma$. In $A(\sigma)$, the interface $\sigma$ is a sequence of sites $s$ in $\interface(A)$, with internal states (as subscript) and binding states (as superscript). The internal state of the site $s$ may be written as \site{s}{\epsilon}{}, which means that either it does not have internal states (when $s\in \Sigma(A) \setminus \Sigma_\iota(A)$), or it is not specified. A site that bears an internal state $m\in\stateset$ is written \site{s}{m}{} (in such a case $s\in\interface_\iota(A)$). The binding states of a site $s$ can be specified as \site{s}{}{\epsilon}, if it is \emph{free}, otherwise it is bound (which is possible only when $s\in\interface_\lambda(A)$). There are several levels of information about the binding partner: we use a binding label $i\in\bindinglabel$ when we know the binding partner, or a wildcard bond $\bound{}$ when we only know that the site is bound. The detailed description of the syntax of a Kappa agent is given by the following grammar: 
\begin{equation*}
\begin{array}{rcllrccll}
a & ::= & \agent{$N$}{$\sigma$} & \hbox{(agent)} & \hspace*{5mm} &  s  & ::=  & \text{\site{n}{\iota}{\lambda}} & \hbox{(site)}\cr
\agent{$N$}{} & ::= & A\in \agentname & \hbox{(agent name)}  &&\text{\site{n}{}{}} & ::=  & x\in\sitename & \hbox{(site name)} \cr
\sigma & ::= & \emptylist \;\mid\; s{\sitesep}\sigma & \hbox{(interface)} && \iota & ::= & \emptystate \;\mid\; m\in\stateset & \hbox{(internal state)}\cr
&&&&& \lambda & ::= & \emptybinding \;\mid\; \bound{}\;\mid\;i\in \bindinglabel & \hbox{(binding state)} \cr
\end{array}
\end{equation*}
We generally omit the symbol $\emptybinding$.
\end{definition}

\begin{definition} (Kappa expression)
\emph{Kappa expression} $E$ is a set of agents \agent{A}{$\sigma$} and fictitious  agents $\emptyagent$. Thus the syntax of a Kappa expression is defined as follows:
\begin{align*}
E & ::= \emptylist \;\mid\; a{\agentsep}E \;\mid\; \emptyagent{\agentsep}E.
\end{align*}
\end{definition}

The structural equivalence $\equiv$, defined as the smallest binary equivalence relation between expressions that satisfies the rules given as follows:
\begin{equation*}
 \begin{array}{rcl}
   E{\agentsep}\agent{$A$}{$\sigma{\sitesep}s{\sitesep}s'{\sitesep}\sigma'$}{\agentsep}E' & \equiv & E{\agentsep}\agent{$A$}{$\sigma{\sitesep}s'{\sitesep}s{\sitesep}\sigma'$}{\agentsep}E' \\
   E{\agentsep}a{\agentsep}a'{\agentsep}E' & \equiv & E{\agentsep}a'{\agentsep}a{\agentsep}E' \\
   E & \equiv & E{\agentsep}\emptyagent \end{array} \hfill
\begin{array}{c}
\strutfrac{\displaystyle
  i,j \in \bindinglabel \textnormal{ and }
  i \textnormal{ does not occur in } E}{
  \displaystyle E[i/j] \equiv E} \\
\strutfrac{\displaystyle 
  i \in \bindinglabel \textnormal{ and } i \textnormal{ occurs only once in } E}{  \displaystyle E[\emptybinding/i] \equiv E}
\end{array}
\end{equation*}
 stipulates that neither the order of sites in interfaces nor the order of agents in expressions matters, that a fictitious agent might as well not be there, that binding labels can be injectively renamed and that \emph{dangling bonds} can be removed.

\begin{definition} (Kappa pattern, Kappa mixture)
A Kappa pattern is a Kappa expression which satisfies the following five conditions: (i) no site name occurs more than once in a given interface; (ii) each site name $s$ in the interface of the agent $A$ occurs in $\interface(A)$; (iii) each site $s$ which occurs in the interface of the agent $A$ with a non empty internal state occurs in $\Sigma_\iota(A)$; (iv) each site $s$ which occurs in the interface of the agent $A$ with a non empty binding state occurs in $\Sigma_\lambda(A)$; and (v) each binding label $i\in\bindinglabel$ occurs exactly twice if it does at all --there are no dangling bonds. A \emph{mixture} is a pattern that is fully specified, i.e.~each agent $A$ documents its full interface $\interface(A)$, a site can only be free or tagged with a binding label $i\in\bindinglabel$, a site in $\interface_\iota(A)$ bears an internal state in $\stateset$, and no fictitious agent occurs.
\end{definition}

\begin{definition} (Kappa rule)
A Kappa rule $r$ is defined by two Kappa patterns $E_\ell$ and $E_r$, and a rate $k\in \RR_0^+$, and is written: $r = {E_\ell \rightarrow E_r @ k}$.

A rule $r$ is well-defined, if the expression $E_r$ is obtained from $E_\ell$ by finite application of the following operations: (i) creation (some fictitious agents $\emptyagent$ are replaced with some fully defined agents of the form \agent{A}{$\sigma$}, moreover $\sigma$ documents all the sites occurring in $\interface(A)$ and all site in $\interface_\iota(A)$ bears an internal state in $\stateset$), (ii) unbinding (some occurrences of the wild card and binding labels are removed), (iii) deletion (some agents with only free sites are replaced with fictitious agent $\emptyagent$), (iv) modification (some non-empty internal states are replaced with some non-empty internal states), (v) binding (some free sites are bound pair-wise by using binding labels in $\bindinglabel$). 
\end{definition}

From now on, we assume all rules to be well-defined. We sometimes omit the rate of a rule. Moreover, we denote by $E_\ell\lrar E_r @k_1,k_2$  the two rules $E_\ell \lrar E_r\rar @k_1$ and $E_r \rar E_\ell @k_2$.

\begin{definition} (Kappa system) 
A \emph{Kappa system} $\mathcal{R} = (\pi_0^{\mathcal{R}},\{r_1,\ldots,r_n\})$ is given by finite distribution over initial mixtures $\pi_0^{\mathcal{R}}\;:\;\{M_{0_1},\ldots,M_{0_k}\} \rightarrow [0,1]$, and a finite set of rules $\{r_1,\ldots,r_n\}$. 
\end{definition}

In order to apply a rule $r := E_{\ell} \rar E_{r} @k$ to a mixture $M$, we use the structural equivalence $\equiv$ to bring the participating agents to the front of $E$ (with their sites in the same order as in $E_{\ell}$), rename binding labels if necessary and introduce a fictitious agent for each agent that is created by $r$. This yields an equivalent expression $E'$ that \emph{matches} the left and side (\rlhs)  $E_\ell$, which is written ${E}\sat{E_\ell}$ as defined as follows:

{\newcommand{\lla}{\emptyagent  \sat{}  \emptyagent }
\newcommand{\llb}{\lambda_\ell \in \{\bound{},\lambda\} \implies \lambda \sat \lambda_\ell}
\newcommand{\llc}{\iota_\ell \in \{\emptystate,\iota\} \implies\iota \sat \iota_\ell}
\newcommand{\lld}{\iota \sat \iota_\ell \logicand \lambda\sat \lambda_\ell \implies \text{\site{n}{\iota}{\lambda}}  \sat  \text{\site{n}{\iota_\ell}{\lambda_\ell}}}
\newcommand{\lle}{\sigma  \sat  \emptylist}
\newcommand{\llf}{s\sat s_\ell \logicand  \sigma\sat \sigma_{\ell} \implies s\sitesep\sigma \sat s_\ell\sitesep\sigma_\ell}
\newcommand{\llg}{\sigma \sat \sigma_\ell \implies \text{\agent{$N$}{$\sigma$}} \sat \text{\agent{$N$}{$\sigma_\ell$}}}
\newcommand{\llh}{E  \sat \emptylist}
\newcommand{\lli}{a\sat  a_{\ell} \logicand  E \sat E_{\ell} \implies a{\agentsep}E \sat a_{\ell}{\agentsep}E_{\ell}}
\begin{equation*}
\begin{array}{ccc}
\llh & \llg & \lld \cr 
\hspace*{-1mm} \lli \hspace*{-3mm} \mbox{} & \lle & \llc \cr
\lla & \llf \hspace*{-4mm} \mbox{} & \llb \cr
\end{array}
\end{equation*}}

Note that in order to find a matching, we only uses structural equivalence on $E$, not $E_{\ell}$. We then replace $E'$ by $E'[E_r]$ which is defined as follows:
{%
\newcommand{\lla}{\emptyagent[a_r]  =  a_r}
\newcommand{\llb}{a_r[\emptyagent] =  \emptyagent}
\newcommand{\llc}{\lambda_r\in\bindinglabel\cup\{\emptybinding\} \implies \lambda[\lambda_r] = \lambda_r}
\newcommand{\lld}{\lambda[\bound{}]  =  \lambda}
\newcommand{\lle}{\iota_r\in\stateset \implies \iota[\iota_r]=\iota_r}
\newcommand{\llf}{\text{\site{n}{\iota}{\lambda}}[\text{\site{n}{\iota_r}{\lambda_{r}}}]  =   \text{\site{n}{\iota[\iota_r]}{\lambda[\lambda_r]}}}
\newcommand{\llg}{\sigma[\emptylist]  =     \sigma}
\newcommand{\llh}{(s\sitesep\sigma)[s_r\sitesep\sigma_r]  =  s[s_r]\sitesep\sigma[\sigma_r]}
\newcommand{\lli}{\agent{$N$}{$\sigma$}[\agent{$N$}{$\sigma_r$}]  =  \agent{$N$}{$\sigma[\sigma_{r}]$}}
\newcommand{\llj}{E[\emptylist]  =  E}
\newcommand{\llk}{(a{\agentsep}E)[a_r{\agentsep}E_r]  =  a[a_{r}]{\agentsep}E[E_{r}]}
\begin{equation*}
\begin{array}{ccc}
\llj & \lli & \llf \cr
\llk & \llg & \lle \cr
\lla & \llh & \llc  \cr
\llb & & \lld
\end{array}
\end{equation*}
This may produce dangling bonds (if $r$ unbinds a wildcard bond or destroys an agent on one side of a bond) or fictitious agents (if $r$ destroys agents), so we use $\equiv$ resolve them. 
}
\subsubsection{Population-based stochastic semantics}

In addition to the rate constants $k$, careful counting of the number of times each rule can be applied to a mixture is required to define the system's quantitative semantics correctly \cite{fmsb:2008,ecsb:2009}. Thus we define the notions of embedding between a mixture and an expression. Let $Z=a_1\agentsep\ldots\agentsep a_m$ and $Z_\ell=c_1,\ldots,c_n$ be two patterns with no occurrence of the fictitious agent and such that there exists a pattern $Z'=b_1,\ldots,b_m$ that satisfies both $Z\equiv Z'$ and $Z' \sat Z_\ell$ (and so, in particular, $n\leq m$).

The agent permutations used in the proof that $Z\equiv Z'$ allow us to derive a permutation $\texttt{p}$ such that $a_{\texttt{p}(i)}\equiv b_{i}$. The restriction $\phi$ of $\texttt{p}$ to the integers between $1$ and $n$ is called an \emph{embedding} between $Z_\ell$ and $Z$. This is written $\embed{Z_\ell}{Z}{\phi}$. There may be several embeddings between $Z_\ell$ and $Z$ for the same $Z'$; if so, this influences the relative weight of the reaction in the stochastic semantics. We denote by $\embedset{Z}{Z'}$ the set of embeddings between $Z$ and $Z'$. This notion of embedding is extended to patterns (including fictitious agents) by defining $\embed{Z_\ell}{Z}{\phi}$ if, and only if,  $\embed{(\downarrow_\emptyagent Z_\ell)}{(\downarrow_\emptyagent Z)}{\phi}$, where $\downarrow_\emptyagent$ removes all occurrences of the fictitious agent in patterns. 

We assume that $E_\ell$ is the \rlhs\ of a rule $r := E_\ell \rar  E_r @k$ and $Z$ is a mixture such that $\embed{E_\ell}{Z}{\phi}$. Let  $Z=a_1,\ldots,a_m$ and $\downarrow_\emptyagent E_\ell = c_1,\ldots c_n$. Given $Z' \equiv Z$ (we write $\downarrow_\emptyagent Z'= b_1,\ldots,b_m$) and a bijection $\texttt{p}$ such that we have $Z' \sat E_\ell$, $b_i\equiv a_{\texttt{p}(i)}$ for $1\leq i \leq m$ and $\phi(j)=\texttt{p}(j)$ for $1 \leq j \leq n$. The result of applying $r$ along $\phi$ to the mixture $Z$ is defined (modulo $\equiv$) as any pattern that is  $\equiv$-equivalent to $Z'[E_r]$. In other words the embedding $\phi$ between $E_\ell$ and $Z$ fully defines the action of $r$ on $Z$ up to structural equivalence.

We are now ready to define the stochastic semantics by the mean of a WLTS. In this semantics, the state is a soup of agents, that is to say that we do not care about the order of agents in mixture. So the states of the system are the class of $\equiv$-mixture. 

Defining species as connected mixture, the state of the system can be seen as a multi-set of species. The formal definition of a Kappa species is as follows: 
\begin{definition} (Kappa species)
A pattern $E$ is \emph{reducible} whenever $E\equiv E',E''$ for some non-empty patterns $E'$, $E''$; A \emph{Kappa species} is the $\equiv$-equivalence class of a irreducible Kappa mixture. 
\end{definition}

As explained earlier, the action of a rule $r$ on a mixture $E$ is fully defined (up to $\equiv$) by an embedding $\phi$ between the \rlhs\ $E_\ell$ of the rule $r$ and the mixture.  So as to consider computation steps over $\equiv$-equivalent of mixtures,  we introduce an  equivalence relation $\equiv_{\mathcal{L}}$ over triples $(r,E,\phi)$ where $\phi$ is an embedding of the \rlhs\ $E_\ell$ of $r$ into $E$. We say that $(r_1,E_1,\phi_1) \equiv_{\mathcal{L}} (r_2,E_2,\phi_2)$ if, and only if, (i) $r_1=r_2$ and (ii) there exists an embedding $\psi\in\embedset{E_1}{E_2}$ such that $\phi_2=\psi\circ\phi_1$.

\begin{definition} (WLTS of a Kappa system) \label{dfn:WLTSKappa}
\label{dfn9}
Let $\mathcal{R}=(\pi_0^\mathcal{R},\{r_1,\ldots,r_n\})$ be a Kappa system. We define the WLTS $\mathcal{W}_\mathcal{R}=(\mathcal{X},\mathcal{L},w,\pi_0)$ where:
(i) $\mathcal{X}$ is the set of all $\equiv$-equivalent classes of mixtures;
(ii) $\mathcal{L}$ is the set of all $\equiv_{\mathcal{L}}$-equivalence classes of triples $(r,E,\phi)$ such that $\phi$ is an embedding between the \rlhs\ $E_\ell$ of $r$ and $E$;
(iii) $w(x,l,x') = \strutfrac{k}{\nauto{E_\ell}}$ whenever there exist a rule $r=E_\ell \rar E_r@k$, two mixtures $E$ and $E'$, and an embedding $\phi\in\embedset{E_\ell}{E}$, such that $x=[E]_{\equiv}$, $l=[r,E,\phi]_{\equiv_{\mathcal{L}}}$, and $E'$ is the result (up to $\equiv$) of the application of $r$ along $\phi$ to the mixture $E$; otherwise $w(x,l,x')=0$;
(iv) $\pi_0(x) = \sum\{ \pi^{\mathcal{R}}_0(E')\;\mid\; {E'\in\textit{Dom}(\pi_0^{\mathcal{R}})\cap x}\}$.
\end{definition}

The stochastic semantics of a Kappa system $\mathcal{R}$ is then defined as the trace distribution semantics of the WLTS $\mathcal{W}_{\mathcal{R}}$. 

\section{Reduction procedure} 
\label{sec:reduction}

\newcommand{\test}[3]{{\bf{test}}_{#1}^{#2}(#3)}
\newcommand{\transposition}[2]{[{}^{#1}_{#2}]}

In this section, we assume, without any loss of generality that $\interface_\iota$ and $\interface_\lambda$ are disjoint sets. This can always be achieved by taking two disjoint copies $\sitename_\iota$ and $\sitename_\lambda$ of $\sitename$ and using site names in $\sitename_\iota$ to bear internal states, and site names in $\sitename_\lambda$ to bear binding states. 

Informally, a contact map represents a summary of the agents and their potential bindings. 
\begin{definition} (Contact map) 
Given a Kappa system ${\cal R}$, a \emph{contact map} (CM) is a graph object $({\cal N}, {\cal E})$, where the set of nodes ${\cal N}$ are agent types equipped with the corresponding interface, and the edges are specified between the sites of the nodes. Formally, we have that ${\cal N} =\{(A,\Sigma(A)) \;\mid\; A\in {\cal A}\}$ and ${\cal E} \subseteq \{((A,s),(A',s')) \;\mid\; A,A'\in {\cal A} \hbox{ and } s\in \Sigma(A), s'\in \Sigma(A')\}$. If the site $s$ of an agent of type $A$ and the site $s'$ of an agent of type $A'$ bear the same binding state in the rhs $E_r$ of a rule, then there exists an edge $e=((A,s),(A',s'))\in {\cal E}$ between  $s\in \Sigma(A)$ and $s'\in \Sigma(A')$.
\end{definition}

We say that a site $s$ of the agent $a$ is \emph{tested} by the rule $r$, if it is contained in the \rlhs\ $E_\ell$ of the rule $r$. 
 
\begin{definition} (Annotated Contact map)\label{dfn:ACM}
Given a Kappa system ${\cal R}$, and its CM $(\cal N, \cal E)$, a valid annotated contact map (ACM) $({\mathscr N},\cal E)$ is a contact map where all agents  are annotated with respect to the rule set ${\cal R}$. The \emph{annotation} on the agent of type $A\in {\cal A}$ with respect to the rule $r$ is given by the equivalence relation on its set of sites $\approx_A\subseteq \Sigma(A)\times \Sigma(A)$ such that:
\begin{myitemize} 
\item If a rule $r$ tests the sites $s_1$ and site $s_2$ of agents $a_1$,$a_2$ (it is possible that $a_1=a_2$) of type $A$, then $s_1\approx_A s_2$;
\item If a rule $r$ creates an agent $a$ of type $A$, then all the sites of $\Sigma(A)$ are in the same equivalence  class, i.e.~$\approx_A = \Sigma(A)\times \Sigma(A)$;
\end{myitemize} 
Note that there can be several annotations of the agent type $A\in\agentname$ which satisfy the conditions. More precisely, if the equivalence relation  $\approx_A$ meets the condition, then so does any of its refinement. This allows to define the smallest such equivalence relation $\approx_A$  which we call the \emph{minimal} annotation of agent $A$. An ACM is \emph{minimal} whenever each agent type is annotated by its minimal annotation. 
\end{definition}

Let $r$ be a rule and an ACM which is valid with respect to the singleton $\{r\}$. Then for any agent type $A\in\agentname$, either $A$ does not occur in the \rlhs\ of $r$, or $A$ occurs but all occurrences of $A$ have an empty interface, or $A$ occurs, tests some sites which all belongs to a same equivalence  class $C$ in $\approx_A$. In the latter case, we define $\test{r}{\textit{ACM}}{A} = C$, otherwise, we define $\test{r}{\textit{ACM}}{A}=\emptyset$. 

The meaning of the ACM is to summarize the dependences between sites that can occur during the simulation of a Kappa system. If the two sites $s$ and $s'$ in the $\interface(A)$ are correlated by the relation $\approx_A$, i.e.~$s\approx_A s'$, it suggests that they are \emph{dependent} in the following way. We must not aggregate in the same equivalence  class any two states $x$ and $x'$, such that they contain the agent $A$ in a different evaluation of the sites $s$ and $s'$. On the other hand, if the two sites $s$ and $s'$ are not correlated by $\approx_A$, then we may aggregate the states by the 'marginal' criteria, i.e.~the condition which involves only one of the sites.  Therefore, the less states are related by $(\approx_A)_{A\in\agentname}$, the better the reduction will be. To numerically justify this, we can imagine having an agent type $A$ whose interface has $n$ different sites $s_1,...,s_n$, and each of them has two possible internal state modifications. Let us observe the two limiting relations $\approx_A$, i.e.~$\approx_A=\{(s_i,s_j) \;\mid\; 1\leq i \leq n,1\leq j \leq n\}$, and $\approx_A'=\{(s_i,s_i) \;\mid\; 1\leq i\leq n\}$. The annotation $\approx_A$ enforces at least to $2^n$ states to describe all modifications of the agent $A$, whereas the annotation $\approx_A'$ suggests that it is enough to use only $2\cdot n$ of them.

The ACM can be used to identify parts of Kappa species that we call fragments. 
\begin{definition} (Kappa fragments)
\label{dfn14}
A fragment is the $\equiv$-equivalent class of a non empty irreducible pattern $E$ such that: (i) the set of sites in the interface $\sigma$ of an agent \agent{A}{$\sigma$} in $E$ is an equivalence  class of $\approx_A$, (ii) sites can only be free or tagged with a binding label $i\in\mathbb{N}$ and sites in $\Sigma_\iota$ are tagged with an internal state in $\stateset$, (iii) there is no occurrence of fictitious agent $\emptyagent$.
\end{definition}

We can use fragments to abstract the WLTS $\mathcal{W}_\mathcal{R}$, by identifying the mixtures which  have the same (multi-)set of fragments.  To reach that goal, we first overload the definition of $\equiv$ in order to identify mixtures having the same fragments. We introduce the binary relation $\equiv^\sharp$ as the smallest equivalence relation over patterns which is compatible with $\equiv$ and such that:
\begin{align*}
A(\sigma),A(\sigma'),E \equiv^\sharp  A(\uparrow_C\sigma',\uparrow_{\interface(A)\setminus C}\sigma),A(\uparrow_C\sigma,\uparrow_{\interface(A)\setminus C} \sigma'),E
\end{align*}
for any agent type $A\in\agentname$, $\sigma$,$\sigma'$ interfaces, $E$ pattern, and $C$ an $\approx_A$-equivalence class of sites.  For any set of sites $X\subseteq \sitename$, the projection function $\uparrow_X$ over interfaces keeps only the sites in $X$, formally $\uparrow_X$ is defined by $\uparrow_X \emptylist = \emptylist$, $\uparrow_X(\text{\site{s}{\iota}{\lambda}\sitesep} \sigma') = \text{\site{s}{\iota}{\lambda}\sitesep} \uparrow_X \sigma'$ whenever $s\in X$, and $\uparrow_X(\text{\site{s}{\iota}{\lambda}\sitesep} \sigma') = \uparrow_X \sigma'$ otherwise.

 Now we define the relation $\sim_{\mathcal{L}^\sharp}$ which stipulates that  the rule $r_1$ applies on $E_1$ along $\phi_1$ the same way as the rule $r_2$ on $E_2$ along $\phi_2$. More formally, we write  $(r_1,E_1,\phi_1) \sim_{\mathcal{L}^\sharp} (r_2,E_2,\phi_2)$ whenever the following properties are all satisfied:
\begin{myenumerate}
\item $r_1=r_2$;
\item $E_1\equiv^\sharp E_2$;
\item $\phi_2 = \psi\circ\phi_1$, where $\psi$ is the  permutation which tracks how the sub-interface $\uparrow_{\test{r}{\textit{ACM}}{A_i}}(A_i(\sigma_i))$ is moved in the proof that $E_1 \equiv^\sharp E_2$, for any agent $A_i(\sigma_i)$ occurring in $E_1$.

More precisely, the transposition $\transposition{i}{i+1}$ is associated to an agent permutation of the agents at position $i$ and $i+1$; the transposition $\transposition{1}{2}$ is associated to a step which permutes the sub-interface $\test{r}{\textit{ACM}}{A}$ of two agents of type $A$, for any agent type $A\in\agentname$; any other step is associated with the identity function (over $\mathbb{N}$). The function $\psi$ is defined as the composition of all the permutations (in the reverse order) which are associated to the elementary steps in the proof that $E_1 \equiv^\sharp E_2$.

\item the result of the application of $r_1$ to $E_1$ along $\phi_1$ is $\equiv$-equivalent to the result of the application of $r_2$ to $E_2$ along $\phi_2$. 
\end{myenumerate}

\begin{definition} (Abstract WLTS of a Kappa system)\label{dfn:WLTSKappaAbs}
\label{dfn13}
Let $\mathcal{R}=(\pi_0^\mathcal{R},\{r_1,\ldots,r_n\})$ be a Kappa system. We define the WLTS $\tilde{\mathcal{W}_\mathcal{R}}=(\tilde{\mathcal{X}},\tilde{\mathcal{L}},\tilde{w},\tilde{\pi_0})$ where:
\begin{myitemize}
\item $\tilde{\mathcal{X}}$ is the set of all $\equiv^\sharp$-equivalent class of mixture;
\item $\tilde{\mathcal{L}}$ is the set of all $\equiv^\sharp_{\mathcal{L}}$-equivalent class of triples $(r,E,\phi)$  such that $\phi$ is an embedding between the \rlhs\ $E_\ell$ of $r$ and $E$;
\item $\tilde{w}(\tilde{x},\tilde{\lambda},\tilde{x'})$ is equal to $\strutfrac{k}{\nauto{E_\ell}}$ whenever there exist a mixture $E$, a rule $r$, and an embeddings $\phi$ such that $\tilde{x} = [E]_{\equiv^\sharp}$ and $\tilde{\lambda} = [r,E,\phi]_{\equiv^\sharp_{\mathcal{L}}}$; otherwise, $\tilde{w}(\tilde{x},\tilde{\lambda},\tilde{x'})$ is equal to $0$;
\item for any $\tilde{x}\in\tilde{\mathcal{X}}$, $\tilde{\pi_0}(\tilde{x})= \sum_{E' \in \textit{Dom}(\pi_0^\mathcal{R})\cap \tilde{x}}  \pi^{\mathcal{R}}_0(E')$.
\end{myitemize}
\end{definition}

Let us define the relation $\sim$ over $\mathcal{X}$ by $[E_1]_\equiv \sim [E_2]_\equiv$ if, and only if, $E_1\equiv^\sharp E_2$ and the relation $\sim_{\mathcal{L}}$ over $\mathcal{L}$ by $[\lambda_1]_{\equiv_{\mathcal{L}}} \sim_{\mathcal{L}} [\lambda_2]_{\equiv_{\mathcal{L}}}$ if, and only if, $\lambda_1\equiv^\sharp_{\mathcal{L}} \lambda_2$.

\section{Abstraction}
\label{sec:abstraction}

We introduce abstractions on WLTS by aggregating the states and labels into partition classes.  We obtain a new WLTS defined over the aggregated states and labels. Each non-trivial abstraction is a loss of information. However some of them are such that it is possible to do the stochastic analysis on the aggregates rather than on concrete states.  We address the problem of characterizing when this is possible, and if so, how the weights in the abstracted system are computed. We also discuss the reverse process - given the abstracted system, and a particular probability distributions over the aggregates, whether we can make conclusions about the traces in the concrete system. We do the general theoretical analysis of the abstractions on WLTS, and afterwards we show the relation with the reduction of Kappa systems, that is presented in Sec.~\ref{sec:reduction}.

\begin{definition}  (Abstraction) \label{dfn12}
Consider a WLTS ${\cal W} =(\S,\L,w,\pi_0)$, and a pair of equivalence relations $(\sim,\sim_{\mathcal{L}})\in \S^2\times \L^2$, such that each $\sim$-equivalence class and each $\sim_{\mathcal{L}}$-equivalence class is finite. We denote the equivalence classes by $\tilde{x}$, $\tilde{l}$, and we write $x\in \tilde{x}$, to indicate that $x$ belongs to the equivalence class $\tilde{x}$, and $l\in \tilde{l}$ to indicate that the label $l$ belongs to the equivalence  class $\tilde{l}$.

A  WLTS of the form $\tilde{{\cal W}} = (\S_{/\sim}, \L_{/\sim_{\mathcal{L}}}, \tilde{w},\tilde{\pi_0})$, where $\tilde{\pi}_0(\tilde{x})=\sum\{\pi_0(x)\;\mid\;x\in \tilde{x}\}$ is called  an \emph{abstraction} of ${\cal W}$, induced by the pair of equivalence relations $(\sim,\sim_{\mathcal{L}})$. Note that more abstractions can be induced by ${\cal W}$, depending on how $\tilde{w}$ is defined.

Moreover, for any two cylinder sets of traces $\tildtauu\in \tildetset$ and $\Tauu\in  \tset$, we say that $\tildtauu =  \trans[I_1]{\tilde{x}_0}{\tilde{l}_1}{\tilde{x}_1}  \ldots \trans[I_k]{\tilde{x}_{k-1}}{\tilde{l}_k}\tilde{x}_k$ is an abstraction of $\Tauu =  \trans[I_1]{x_0}{l_1}{x_1}  \ldots \trans[I_k]{x_{k-1}}{l_k}{x_k}$, and we write it $\Tauu \in \tildtauu$.
\end{definition}

\begin{definition} (Sound abstraction: Aggregation) \label{dfn:sound}
We say that the abstraction $\tilde{{\cal W}}$ is a \emph{sound abstraction} of ${\cal W}$, if the probability of any cylinder set of traces $\tildtauu \in \tildetset$ is equal to the sum of the probabilities of all the cylinder sets of traces $\Tauu \in \tset$, whose abstraction is $\tildtauu$: \[\pi(\tildtauu) = \sum\{\pi(\Tauu)\;|\;\Tauu\in \tildtauu\}.\]
\label{thmsound}
\end{definition}

We introduce a function $\ga:\S_{/\sim}\ra (\S\ra [0,1])$ which assigns to each of the partition class $\tilde{x}\in \S_{/\sim}$ a probability distribution over the states that belong to this partition class. The set of all such vectors $\gamma$ we denote by ${\Gamma}_{\S,{\sim}}$ is defined as: \[\{\ga\;|\;\ga:\S_{/\sim} \ra (\S\ra [0,1])  \logicand \forall \tilde{x}\in\tilde{\mathcal{X}}, \sum_{x\in \tilde{x}} \ga(\tilde{x},x)= 1\}.\]
We can think of the value $\ga(\xAbs,x)$ as the \emph{conditional probability} of being in the state $x$, knowing that we are in state $\xAbs$, i.e.~$\Pr(X_t=x \;\mid\; X_t\in \xAbs)=\ga(\xAbs,x)$. We note that, when thinking of $\ga$ as the conditional probability, it should be a time-dependent value. However, we refer to $\ga$ as to a single, constant distribution. This will be justified in Lem.~\ref{lem1}.
\begin{definition} (Complete abstraction: Deaggregation) \label{dfn:complete}
We say that the abstraction $\tilde{{\cal W}}$ is a \emph{complete abstraction} of ${\cal W}$ for $\ga\in \Gamma_{\S,{\sim}}$, if the following holds. Given the probability of an arbitrary abstract cylinder set of traces of length $k\geq 1$, that ends in the abstract state $\tilde{x}_k$ (written $\trans[]{\tildtauu}{}{\tilde{x}_k}$), we can recompute the probability of ending the trace in the concrete state $x_k\in \tilde{x}_k$ in the following way: 
\begin{align*}
\pi(\trans[]{\tildtauu}{}{x_k}) = \ga(\tilde{x}_k,x_k)\cdot 
\pi(\trans[]{\tildtauu}{}{\tilde{x}_k}).
\end{align*}
\end{definition}
Sound and complete abstractions ${\tilde{\cal W}}$ cannot be induced by any pair of relations $(\sim,\sim_{\mathcal{L}})$, because there might not  exist a weighting function $\tilde{w}:\S_{/\sim}\times \L_{/\sim_{\mathcal{L}}}\times \S_{/\sim}\ra {\mathbb R}$, such that the conditions from Dfn.~\ref{dfn:sound} and Dfn.~\ref{dfn:complete} are met. Moreover, even if such $\wAbs$ exists, the remaining question is whether the information on the abstract system is enough to compute them.

We now restate the main Theorem from \cite{journal}, that the abstractions for Kappa systems, that we resumed in Sec.~\ref{sec:reduction}, are sound and complete.

\begin{thm} (The abstraction induced by the ACM is sound and complete) \label{thm:main} Given a Kappa system ${\cal R} = (\pi_0^{\cal R},\{r_1,...,r_n\})$, and an ACM $(\mathscr N,\cal E)$, an abstraction $\tilde{\cal W_\mathcal{R}} = (\S_{/\sim}, {\cal L}_{/_{\sim}}, \tilde{w}, \tilde{\pi_0})$ induced by the partition classes $(\sim,\sim_{\cal L})\subseteq \S^2\times \L^2$, as proposed in the Def.~\ref{dfn13} is a sound and complete abstraction of the ${\cal W}_{\mathcal{R}} = (\S,{\cal L},w,\pi_0)$, provided that for any two mixtures $M$ and $M'$ such that $M \equiv^\sharp M'$, we have: \[\pi_0([M]_{\equiv})\cdot \nauto{M'}  = \pi_0([M']_{\equiv})\cdot \nauto{M}.\] 
We consider a mixture $M$. We denote by $x\in\mathcal{X}$ the equivalence class $[M]_\equiv$, and by $\tilde{x}\in\tilde{\mathcal{X}}$ the equivalence class $[M]_{\equiv^\sharp}= [x]_{\sim}$. The conditional probability $\ga(\tilde{x},x)$ is computed as the ratio of the number of automorphisms of $x$ (embedding between $x$ and $x$) and the sum of the number of automorphisms of any $\sim$-equivalent state. Thus we have:
\[\ga(\tilde{x},x) = \frac{\nauto{x}}{\sum \{\nauto{x'}\;\mid\;x\sim x'\}}.\]
\end{thm}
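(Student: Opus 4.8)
The plan is to establish, by induction on the length $k$ of the abstract cylinder $\tildtauu$, the single identity
\[
\pi(\trans[]{\tildtauu}{}{x_k}) \;=\; \ga(\tilde x_k,x_k)\cdot\pi(\trans[]{\tildtauu}{}{\tilde x_k})
\qquad (k\ge 0),
\]
with $\ga(\tilde x,x):=\nauto{x}\big/\sum\{\nauto{x'}\mid x'\sim x\}$, and then to read off soundness. Put $N_{\tilde x}:=\sum_{x'\in\tilde x}\nauto{x'}$; since $\sim$ is exactly the partition into $\equiv^{\sharp}$-classes, $N_{\tilde x}$ does not depend on the choice of $x\in\tilde x$, so $\ga\in\Gamma_{\S,\sim}$. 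Two observations frame the induction. First, the case $k=0$ is the theorem's hypothesis in disguise: the proviso $\pi_0([M]_\equiv)\,\nauto{M'}=\pi_0([M']_\equiv)\,\nauto{M}$ for $M\equiv^{\sharp}M'$ says precisely that $x\mapsto\pi_0(x)/\nauto{x}$ is constant on every $\sim$-class, whence, using $\tilde\pi_0(\tilde x)=\sum_{x\in\tilde x}\pi_0(x)=(\pi_0(x)/\nauto{x})\,N_{\tilde x}$, we obtain $\pi_0(x)=\ga(\tilde x,x)\,\tilde\pi_0(\tilde x)$ --- the identity for $k=0$. Second, the identity for length $k$ forces soundness for cylinders of length $k$: summing over $x_k\in\tilde x_k$ and using $\sum_{x_k\in\tilde x_k}\ga(\tilde x_k,x_k)=1$ yields $\pi(\tildtauu)=\sum\{\pi(\Tauu)\mid\Tauu\in\tildtauu\}$. (One also checks, directly from Dfn.~\ref{dfn13}, that $\tilde{\mathcal W}_{\mathcal R}$ is indeed a WLTS in the sense of Dfn.~\ref{dfn1}.)

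For the inductive step $k-1\to k$, split $\tildtauu$ into its prefix $\tildtauu'$ of length $k-1$ ending in $\tilde x_{k-1}$ and its last step (label $\tilde l_k$, interval $I_k$, target $\tilde x_k$). Grouping the concrete refinements $\Tauu\in\tildtauu$ that end in $x_k$ according to their $(k-1)$st state $x_{k-1}\in\tilde x_{k-1}$ and $k$th label $l_k\in\tilde l_k$, Dfn.~\ref{dfn5} gives
\[
\pi(\trans[]{\tildtauu}{}{x_k})
=\sum_{x_{k-1}\in\tilde x_{k-1}}\ \sum_{l_k\in\tilde l_k}
\pi(\trans[]{\tildtauu'}{}{x_{k-1}})\cdot\frac{w(x_{k-1},l_k,x_k)}{a(x_{k-1})}\bigl(e^{-a(x_{k-1})\inf I_k}-e^{-a(x_{k-1})\sup I_k}\bigr).
\]
Apply the induction hypothesis $\pi(\trans[]{\tildtauu'}{}{x_{k-1}})=\ga(\tilde x_{k-1},x_{k-1})\,\pi(\tildtauu')$ and the ``activity is a class invariant'' lemma, $a(x)=\tilde a(\tilde x)$ for every $x\in\tilde x$: the activity then takes the common value $\alpha:=\tilde a(\tilde x_{k-1})$ in every summand, so $\bigl(e^{-\alpha\inf I_k}-e^{-\alpha\sup I_k}\bigr)/\alpha$ factors out and coincides with the corresponding factor in $\pi(\tildtauu)=\pi(\tildtauu')\cdot\tilde w(\tilde x_{k-1},\tilde l_k,\tilde x_k)\cdot\bigl(e^{-\alpha\inf I_k}-e^{-\alpha\sup I_k}\bigr)/\alpha$. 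Cancelling $\pi(\tildtauu')$ (both sides are $0$ if it is), the target identity reduces to the purely combinatorial statement, call it $(\ast)$:
\[
\sum_{x_{k-1}\in\tilde x_{k-1}}\ \sum_{l_k\in\tilde l_k}\ \ga(\tilde x_{k-1},x_{k-1})\,w(x_{k-1},l_k,x_k)\;=\;\ga(\tilde x_k,x_k)\,\tilde w(\tilde x_{k-1},\tilde l_k,\tilde x_k).
\]

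Proving the activity lemma and $(\ast)$ is the heart of the argument, and the main obstacle; here I would follow \cite{journal}. After substituting $w=\tilde w=k_r/\nauto{E_\ell}$ (with $r$ and $k_r$ the rule and rate carried by $\tilde l_k$) and $\ga(\tilde x,x)=\nauto{x}/N_{\tilde x}$, both statements become identities about automorphism-weighted counts of $\equiv^{\sharp}_{\mathcal L}$-classes of rule applications. The structural input is the behaviour of an embedding of $E_\ell$ into a mixture $E$ under the elementary $\equiv^{\sharp}$-rewrites of $E$: such a rewrite preserves the multiset of fragments of $E$, and the sub-interface permutations recorded in the definition of $\sim_{\mathcal L^{\sharp}}$ (Sec.~\ref{sec:reduction}) line up matching rule instances across $\equiv^{\sharp}$-equivalent mixtures; the validity of the ACM --- correlated sites are never split between two fragments, non-correlated sites may be freely recombined --- is exactly what makes this alignment well defined on the relevant embeddings. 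This yields a correspondence between the $r$-labelled out-transitions of $x$ and of $\tilde x$ that preserves automorphism counts, hence $a(x)=\tilde a(\tilde x)$; refining the correspondence so as to record the target state shows that the automorphism-weighted number of $\tilde l_k$-labelled transitions into $x_k$ from the class $\tilde x_{k-1}$ is proportional to $\nauto{x_k}$, with a proportionality constant ($N_{\tilde x_{k-1}}/N_{\tilde x_k}$) that depends only on $\tilde x_{k-1},\tilde x_k,\tilde l_k$ --- which is $(\ast)$. The delicate step is that passing to the target does not break the automorphism-preserving bijection; condition~4 in the definition of $\sim_{\mathcal L^{\sharp}}$ (the results of the two rule applications being $\equiv$-equivalent) is precisely what secures this.

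Feeding $(\ast)$ back into the inductive step closes the induction, and together with the two reductions of the first paragraph this shows that $\tilde{\mathcal W}_{\mathcal R}$ is a sound and complete abstraction of $\mathcal W_{\mathcal R}$ with the stated $\ga$. Finally, the time-independence of $\ga$ anticipated earlier (Lem.~\ref{lem1}) is immediate from $(\ast)$: that identity says exactly that if the conditional law within every class equals $\ga$ at some time, it still does after one step of the dynamics.
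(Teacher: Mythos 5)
Your proposal is correct in outline but organizes the argument quite differently from the paper. The paper itself delegates the details to \cite{journal}, but its intended route — announced right after Lemma~\ref{lem4} — is to factor the statement through the bisimulation machinery of Section~\ref{sec:abstraction}: introduce the individual-based WLTS ${\cal W}^i$ in which every agent carries an identifier, use Lemma~\ref{lem:main} (weight-preserving bijections between $\mathrm{Pred}$ sets, obtained by inspecting the ACM) to show that ${\cal W}^i$ is a backward \emph{uniform} refinement of both ${\cal W}_{\cal R}$ and $\tilde{\cal W}_{\cal R}$, and then apply Lemma~\ref{lem4} to conclude ${\cal W}_{\cal R}\preceq_{B,(\sim,\sim_{L}),\ga}\tilde{\cal W}_{\cal R}$, where $\ga(\tilde x,x)$ arises as a ratio of cardinalities of individual-level classes and coincides with the automorphism ratio $\nauto{x}/\sum\{\nauto{x'}\mid x'\sim x\}$; soundness and completeness then follow from the general theorem that a backward Markov bisimulation with $\ga=\pi_0|_{\tilde x}$ yields a sound and complete abstraction. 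You instead run an induction directly on the length of abstract cylinder sets; your base case (the hypothesis on $\pi_0$ says exactly that $\pi_0(x)/\nauto{x}$ is constant on each $\sim$-class), your derivation of soundness by summing the completeness identity over the class, and your reduction of the inductive step to activity invariance plus the identity $(\ast)$ are all correct, and $(\ast)$ with the stated $\ga$ is precisely the backward-bisimulation condition on $\delta_B$ — so in effect you re-prove the paper's ``backward bisimulation implies sound and complete'' theorem inline rather than invoking it. The trade-off is clear: your route is more direct and needs no bisimulation apparatus, but the combinatorial heart — that $a(x)$ is constant on $\equiv^\sharp$-classes and equals the abstract activity, and that the automorphism-weighted count of $\tilde l$-labelled transitions from $\tilde x_{k-1}$ into $x_k$ equals $\nauto{x_k}\,N_{\tilde x_{k-1}}/N_{\tilde x_k}$ — is only sketched and deferred to \cite{journal}, whereas the paper's detour through ${\cal W}^i$ exists precisely to make that verification mechanical (at the individual level $\ga$ is uniform and one only exhibits bijections between predecessor sets by ACM inspection). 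Since the paper also points to \cite{journal} for the details, the deferral is defensible, but a self-contained version of your argument still owes a proof of $(\ast)$ — that is where all the Kappa-specific work lives.
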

The reader can find the detailed proof in \cite{journal}.

\subsection{Lumpability} 
Now we define different versions of lumpability and investigate the relationship with sound and complete abstractions.

\begin{definition} (Lumped process)  
Given a WLTS ${\cal W} = (\S,\L,w,\pi_0)$, where $\S=\{x_1,x_2,...\}$, and a partition $\sim\subseteq \S\times\S$ on its state space, we observe the continuous-time stochastic process $\{X_t\}_{t\in \RR_0^+}$, that is generated by ${\cal W}$ (Dfn.~\ref{dfn2}). We define the \emph{lumped process} $\{Y_t\}$ on the state space $\S_{/\sim} =\{\xAbs_1,\xAbs_2,...\}$ (denoted by capital indices, i.e.~$\tilde{x}_I$, $\tilde{x}_J$) and with initial distribution $\tilde{\pi_0}$, so that
\[\Pr(Y_t = \xAbs_J\;\mid\;Y_0=\xAbs_0) =\Pr(X_t\in \xAbs_J\;\mid\;X_0\in \xAbs_0).\]
\end{definition}

The lumped process is not necessarily a Markov process. 

\begin{definition} (Lumpability) \label{dfn:lump}
Given a WLTS ${\cal W} =  (\S,\L,w,\pi_0)$ that generates the process $\{X_t\}$, we say that it is \emph{lumpable} with respect to the equivalence relation  $\sim\subseteq \S\times \S$ if and only if its lumped process $\{Y_t\}$ has the Markov property.
\end{definition}

The evolution of a process depends on the initial distribution, and so does the lumpability property. We thus define the set of initial distributions, for which the lumpability holds. We denote the set of all probability distributions over $\S$ as ${\PI}_{\S}$:
\begin{align*}
\PI_{\S}=\{\pi\;\mid\;\pi:\S\ra [0,1] \hbox{ and } \sum_{x_i\in{\S}}\pi(x_i)=1\}.
\end{align*}
Moreover, we denote the set of initial distributions that produce a chain lumpable with respect to the given equivalence relation $\sim$ by $\PI^{I}_{\S,\sim}$:
\begin{align*}
\PI^{I}_{\S,\sim} := \{\pi\;\mid\;\hbox{ the lumped process initialized with } \pi 
                         \hbox{ is lumpable with respect to } \sim \}.
\end{align*}

Whenever a distribution $\pi\in \PI_{\S}$ is positive on the equivalence  class $\xAbs$, i.e.~$\sum\{\pi(x) \;\mid\; x\in \xAbs\} >0$, we denote by $\pi|_{\xAbs}(x)$, the conditional distribution over the states of $\xAbs$: $\pi|_{\xAbs}(x) = {\pi(x)}/{\pi(\xAbs)}$, when $x\in \xAbs$, and $\pi|_{\xAbs}(x) =0$, otherwise.

\begin{definition} (Strong and weak lumpability) \label{def:lumpSW}
Given a WLTS ${\cal W} =  (\S,\L,w,\pi_0)$ that generates the process $\{X_t\}$, and an equivalence relation  $\sim\subseteq \S\times\S$, 
we say that  $\{X_t\}$ is:
\begin{myitemize}
\item \textit{strongly lumpable} with respect to $\sim$, if the lumped process $\{Y_t\}$ is Markov with respect to \emph{any} initial distribution, i.e.~$\PI^{I}_{\S,\sim}   = \PI_{\S}$;
\item \textit{weakly lumpable} with respect to $\sim$, if there \emph{exists} an initial distribution that makes the lumped process $\{Y_t\}$ Markov, i.e.~$\PI^{I}_{\S,\sim}  \neq \emptyset $.
\end{myitemize}
\end{definition}

Note that the definitions of strong and weak lumpability involve the quantifiers "for all" and "exists" over the probability distributions over a set of states. Thus, checking for either of them involves in general an infinite number of checks. People have given sufficient conditions of strong and weak lumpability on discrete-time Markov chains (DTMC's) \cite{KS60:lump, weakLumpDTMC}. The results had been extended to the continuous-time case \cite{buchholz_lump, weakLumpCTMC}. We rephrase the sufficient conditions stated therein. 


In order to understand the sense of the weak lumpability characterization, we discuss the meaning of $\ga$. We recall the semantics of a WLTS ${\cal W}$ by observing the cylinder sets of traces, i.e.~$\Tauu = \trans[I_1]{x_0}{l_1}{x_1}  \ldots \trans[I_k]{x_{k-1}}{l_k}{x_k}\in \tset$. The abstraction $\tilde{\cal W}$ of ${\cal W}$, induced by $(\sim,\sim_L)$ generates an abstract cylinder set of traces, denoted  $\tildtauu = \trans[I_1]{\xAbs_0}{l_1}{\xAbs_1}  \ldots \trans[I_k]{\xAbs_{k-1}}{l_k}{\xAbs_k}\in \tildetset$.

For any cylinder set of traces $\tildtauu\in\tildetset$, we denote by $\ga_{\tildtauu}$ the distribution of the conditional probabilities over the lumped state $\xAbs_{k}$, knowing that the abstract cylinder of traces $\tildtauu$,  which ends in the abstract state $\xAbs_k$,  was observed, i.e.~\[
\ga_{\tildtauu}(x_k) = \frac{\pi(\trans[]{\tildtauu}{}{x_k})}{\pi({\tildtauu})}.\]
The definition of the complete abstraction suggests that, if $\ga_{\tildtauu}$ was independent of the traces on which it is conditioned, i.e.~$\tildtauu$, then the completeness would hold.

\begin{thm}  (Lumpability on CTMCs)  
\label{thm2}
Let us observe a WLTS ${\cal W} = (\S,\L,w,\pi_0)$ that generates the process $\{X_t\}$, and an equivalence relation  $\sim\subseteq \S\times \S$. We consider the rate matrix $\R:\S\times \S \ra \RR$. If the lumped process is Markov, then we denote its rate matrix by $\tilde{\R}:\S_{/\sim}\times \S_{/\sim}\ra \RR$. Then we have the following characterizations about the lumped process $\{\tilde{X}_t\}$:
\begin{myitemize}
\item If for all $x_{i_1},x_{i_2}\in \S$ such that $x_{i_1}\sim x_{i_2}$, and for all $\xAbs_J\in \S_{/\sim}$, we have that 
\begin{align}
\sum_{x_j\in \xAbs_J}R(x_{i_1},x_j) = \sum_{x_j\in \xAbs_J}R(x_{i_2},x_j), 
\label{q}
\end{align}
then $\{X_t\}$ is strongly lumpable with respect to $\sim$; 
We have: 
\[\tilde{R}(\xAbs_I,\xAbs_J)=  \sum\{R(x_{i_1},x_{j}) \;\mid\; {x_j\in \xAbs_J}\};\]
\item If there exists a family of probability distributions over the lumped states, $\ga\in \Gamma_{\S,{\sim}}$, such that for all $x_{j_1},x_{j_2}\in \S$ such that $x_{j_1}\sim x_{j_2}$ and for all $\xAbs_I\in \S_{/\sim}$, we have that 
\begin{align}
a(x_{j_1}) & = a(x_{j_2}) \hbox{ and } \frac{\sum_{x_i\in \xAbs_i}R(x_i,x_{j_1})}{\ga(\xAbs_J,x_{j_1})}  = \frac{\sum_{x_i\in \xAbs_I}R(x_i,x_{j_2})}{\ga(\xAbs_J,x_{j_2})},
\label{q1}
\end{align}
then 
\begin{myenumerate}
\item  If the distribution $\ga$ is in accordance with $\pi_0$, i.e.~$\pi_0|_{\S_{/\sim}}  = \ga $, then for any finite sequence of states $(x_0,\ldots,x_k)\in \mathcal{X}^{k+1}$ and any sequence of time intervals $(I_1,\ldots,I_k)\in\mathbb{IR}^k$, we consider the set $\tildtauu$ of the traces of the form $\trans[t_1]{x'_0}{l_1}{x'_1}  \ldots \trans[t_1+...+t_k]{x'_{k-1}}{l_k}{x'_k}$. For all $i$, $0\leq i \leq k$ and $x_i\sim x'_i$, and for all $i$, $1\leq i \leq k$, $t_i\in I_i$ and $l_i\in\mathcal{L}$, we have that: $\hbox{if }\pi(\tildtauu)>0 \hbox{ then } \ga_{\tildtauu} = \ga$.

In other words, knowing that we are in state $\xAbs_{I}$, the conditional probability of being in state $x\in \xAbs_{I}$ is invariant of time. 
\item The process $\{X_t\}$ is weakly lumpable with respect to $\sim$. Moreover, we have: 
\[\tilde{R}(\xAbs_I,\xAbs_J) =  \frac{\sum\{R(x_i,x_{j_2}) \;\mid\; {x_i\in \xAbs_I}\}}{\ga(\xAbs_J,x_{j_2})};\]
\end{myenumerate}
\end{myitemize} 
\end{thm}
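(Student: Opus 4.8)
The plan is to establish the two bullets of Theorem~\ref{thm2} separately, treating the strong-lumpability case as a warm-up and then leveraging the same bookkeeping for the weak case. For the first bullet, I would take two $\sim$-equivalent states $x_{i_1}\sim x_{i_2}$ and show that the lumped chain's transition probability $\Pr(Y_{t+\mathrm{d}t}=\xAbs_J\mid Y_t=\xAbs_I)$ is well-defined, i.e.\ does not depend on which concrete representative of $\xAbs_I$ the underlying process $\{X_t\}$ currently occupies. Starting from the infinitesimal description $\Pr(X_{t+\mathrm{d}t}=x_j\mid X_t=x_i)=R(x_i,x_j)\,\mathrm{d}t$ and summing over $x_j\in\xAbs_J$, condition~(\ref{q}) immediately gives that this aggregated infinitesimal rate is the same constant $\sum\{R(x_{i_1},x_j)\mid x_j\in\xAbs_J\}$ for every representative; hence the future of $\{Y_t\}$ depends only on the present lumped state, which is the Markov property, and the displayed formula for $\tilde R$ follows by definition. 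To be careful I would note that the countable-but-finitely-branching assumption from Definition~\ref{dfn1} makes all the sums finite, so no convergence issue arises, and that the non-standard diagonal convention for $R$ (stated after Definition~\ref{dfn2}) is harmless since self-loops contribute equally on both sides.

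For the second bullet, the idea is that the conditional distribution $\ga(\xAbs_I,\cdot)$ plays the role of a fixed ``mixing weight'' inside each class, and condition~(\ref{q1}) is exactly what is needed to make the lumped chain Markov \emph{provided} the process always enters a class $\xAbs_J$ distributed according to $\ga(\xAbs_J,\cdot)$. I would prove statement~1 (invariance of the within-class conditional distribution along any positive-probability abstract cylinder) by induction on the trace length $k$. The base case $k=0$ is the hypothesis $\pi_0|_{\S_{/\sim}}=\ga$. For the inductive step, I would write $\pi(\trans[]{\tildtauu}{}{x_k})$ as a sum over concrete predecessors $x_{k-1}$ of the corresponding quantity for the shorter trace times the one-step kernel; the exit-rate equality $a(x_{j_1})=a(x_{j_2})$ from~(\ref{q1}) ensures the exponential sojourn factor $e^{-a(x_{k-1})\cdot\inf(I_k)}-e^{-a(x_{k-1})\cdot\sup(I_k)}$ in Definition~\ref{dfn5} is constant across the class and factors out, and the ratio equality $\sum_{x_i\in\xAbs_I}R(x_i,x_{j_1})/\ga(\xAbs_J,x_{j_1})=\sum_{x_i\in\xAbs_I}R(x_i,x_{j_2})/\ga(\xAbs_J,x_{j_2})$ forces the incoming weights into $\xAbs_J$ to be proportional to $\ga(\xAbs_J,\cdot)$. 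Combining, $\pi(\trans[]{\tildtauu}{}{x_k}) = \ga(\xAbs_k,x_k)\cdot\pi(\tildtauu)$, which is precisely $\ga_{\tildtauu}=\ga$. Once this invariant holds, statement~2 is short: define $\tilde\pi_0=\pi_0|_{\S_{/\sim}}$ as the witnessing initial distribution, and compute the lumped one-step probabilities; because every visit to $\xAbs_I$ carries the internal distribution $\ga(\xAbs_I,\cdot)$, the aggregated rate from $\xAbs_I$ to $\xAbs_J$ equals $\sum_{x_i\in\xAbs_I}\ga(\xAbs_I,x_i)\sum_{x_j\in\xAbs_J}R(x_i,x_j)$, which by~(\ref{q1}) collapses to the stated $\tilde R(\xAbs_I,\xAbs_J)=\sum\{R(x_i,x_{j_2})\mid x_i\in\xAbs_I\}/\ga(\xAbs_J,x_{j_2})$, independent of representatives; this makes $\{Y_t\}$ Markov, so $\pi_0\in\PI^I_{\S,\sim}$ and the chain is weakly lumpable.

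The main obstacle I anticipate is the inductive step of statement~1: one has to carefully track how the trace-density formula of Definition~\ref{dfn5} decomposes when one conditions on the \emph{last} lumped state rather than the first, and show that the two hypotheses in~(\ref{q1}) — one about outgoing activity, one about incoming weight ratios — combine to propagate the $\ga$-proportionality forward. In particular, the incoming-weight condition is stated as a sum over predecessors $x_i\in\xAbs_I$ for a \emph{fixed} source class, so the bookkeeping must sum this over all source classes reachable in one step, which is where the finitely-branching hypothesis and the label-determinism assumption (``the label fully identifies the transition'') are used to avoid double-counting. A secondary subtlety is that $\ga$ was introduced as a time-independent object whereas conditional probabilities are a priori time-dependent; statement~1 is exactly the lemma that retroactively justifies this, so the argument must not circularly assume it. Everything else — finiteness of sums, the harmlessness of the diagonal convention, and the translation of ``infinitesimal rate equality'' into ``Markov property'' — is routine and can be dispatched with a sentence each.
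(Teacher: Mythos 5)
A preliminary remark: the paper does not actually prove Theorem~\ref{thm2} --- it is explicitly presented as a rephrasing of sufficient conditions imported from the cited lumpability literature --- so there is no in-paper argument to compare yours against; your attempt has to stand on its own. The first half does: your treatment of the strong-lumpability bullet is the standard Kemeny--Snell argument (condition~(\ref{q}) makes the aggregated outgoing rate constant on each class, hence the lumped kernel is independent of the within-class conditional distribution, hence of the past and of $\pi_0$), and the remarks about finite branching and the diagonal convention are correctly dispatched.

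The weak-lumpability half, however, has a genuine gap at exactly the step you flag as the ``main obstacle.'' When you unfold the trace density of Definition~\ref{dfn5} over the last transition and insert the inductive hypothesis, the quantity that must be proportional to $\ga(\xAbs_k,x_k)$ is the \emph{$\ga$-weighted} incoming sum $\sum_{x_{k-1}\in\xAbs_{k-1}}\ga(\xAbs_{k-1},x_{k-1})\,R(x_{k-1},x_k)$, whereas condition~(\ref{q1}) as written only controls the \emph{unweighted} sum $\sum_{x_i\in\xAbs_I}R(x_i,x_k)$. Your claim that the ratio equality ``forces the incoming weights into $\xAbs_J$ to be proportional to $\ga(\xAbs_J,\cdot)$'' conflates these two sums, and the same conflation reappears in the final ``collapses to'' computation of $\tilde R$. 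The gap is not cosmetic: take a singleton initial class $\{z\}$ with $R(z,x_1)=1$, $R(z,x_2)=2$, a class $\xAbs_I=\{x_1,x_2\}$ with $R(x_1,y_1)=R(x_2,y_2)=1$ and $R(x_1,y_2)=R(x_2,y_1)=0$ into $\xAbs_J=\{y_1,y_2\}$; then~(\ref{q1}) holds (with $\ga(\xAbs_I,\cdot)=(1/3,2/3)$, $\ga(\xAbs_J,\cdot)=(1/2,1/2)$, all within-class activities equal), yet after two jumps the conditional distribution on $\xAbs_J$ is $(1/3,2/3)\neq\ga(\xAbs_J,\cdot)$, so the invariant $\ga_{\tildtauu}=\ga$ fails. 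The resolution is that the intended hypothesis carries the weight $\ga(\xAbs_I,x_i)$ inside the numerator --- precisely the form the paper later uses for $\delta_B$ in the backward-bisimulation definition --- and with that reading your induction closes cleanly (the activity equality factors out the sojourn term, the weighted ratio condition propagates $\ga$ one step forward, and summing over $x_k\in\xAbs_k$ yields $\tilde R$). As written, though, your key step is unjustified, and under the literal hypothesis it is false; a correct write-up must either adopt the weighted reading of~(\ref{q1}) or supply the missing argument.
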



One shall notice that Thm.~\ref{thm2} gives a weaker condition than the completeness of WLTS abstraction (eg~see Dfn.~\ref{dfn:complete}).  The main reason is that we  do not 'track' transition labels, in the sense that we observe the abstraction on the cylinder sets of  traces induced only by $\sim$, and not also by $\sim_L$. Yet, in the particular case when states fully define the transition labels (ie, if $w(x_1,l_1,x'_1)>0$,  $w(x_2,l_2,x'_2)>0$, $x_1\sim x_2$, and $x'_1\sim x'_2$, then $l_1 \sim_{\mathcal{L}} l_2$), the given condition for weak lumpability coincides with the definition of the complete abstraction of WLTS. 

The characterization of weak (resp.~strong) lumpability given in Thm.~\ref{thm2} is sufficient, but not a necessary condition: there exist systems which are strongly or weakly lumpable, but do not satisfy the conditions given in the theorem. Interestingly, there are systems, such that the characterization from Thm.~\ref{thm2} would detect as strong, but not weakly lumpable, which is counter-intuitive with the terminology. One shall also notice that the conditions of Thm.~\ref{thm2} imply that: in order to aggregate two states in the CTMC, they must not have different waiting times until the next transition (e.g.~they should have the same activity). It is stated explicitly in the characterization of weak lumpability and it can be obtained by summation over the outgoing class in the characterization of strong lumpability.

We consider a WLTS ${\cal W} = (\S,\L,w,\pi_0)$, and the set of all equivalence relations $\sim$ on $\S$, denoted $PT_{\S}$. We introduce the subsets of $PT_{\S}$, denoted $PS$, $PW$, $CS$, $CW$ in the following meaning: (i) $PS$ -the set of all equivalence relations such that $\{X_t\}$ is strongly lumpable with respect to $\sim$; (ii) $PW$ - the set of all equivalence relations such that $\{X_t\}$  is weakly lumpable with respect to $\sim$; (iii) $CS$ - the set of all equivalence relations such that $\{X_t\}$  satisfies the condition for strong lumpability given in the Thm.~\ref{thm2}; (iv) $CW$ - the set of all equivalence relations such that $\{X_t\}$   satisfies the condition for weak lumpability given in the Thm.~\ref{thm2}. 

\begin{lemma} \label{lem1} (Relations on lumpability properties and conditions)
Consider an arbitrary WLTS ${\cal W} = (\S,\L,w,\pi_0)$ and the equivalence relation  $\sim\subseteq \S\times\S$. 
We have the following relations: (1a) if $\sim\in PS$ then $\sim \in PW$, if $\sim \in CS$ then $\sim \in PS$; and if $\sim \in CW$ then $\sim \in PW$;  The converse implication does not hold for any of the statements; (2a) If $\sim \in CW$, that does not imply $\sim \in CS$; (2b) If $\sim \in CS$, that does not imply $\sim\in CW$.
\end{lemma}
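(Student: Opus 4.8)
The lemma is a package of several separate claims, so my plan is to dispatch them one by one, reusing the definitions $PS,PW,CS,CW$ introduced just before the statement. Claim (1a) splits into three implications, each of which is essentially a re-reading of Theorem~\ref{thm2}: the inclusion $CS\subseteq PS$ is the first bullet of that theorem (the condition~\eqref{q} implies strong lumpability), and $CW\subseteq PW$ is item~2 of the second bullet (the condition~\eqref{q1} implies weak lumpability). The inclusion $PS\subseteq PW$ is immediate from Definition~\ref{def:lumpSW}: if the lumped process is Markov for \emph{every} initial distribution then in particular $\PI^I_{\S,\sim}=\PI_\S\neq\emptyset$. So the only real work in (1a) is to exhibit, for each of the three implications, a concrete WLTS witnessing that the converse fails.

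**Counterexamples.** For $PW\not\subseteq PS$ I would take a tiny CTMC — say three states $x_1,x_2,x_3$ with $\sim$ merging $x_1,x_2$ — chosen so that from $x_1$ and $x_2$ the aggregated rates into $\{x_3\}$ differ, so strong lumpability fails by the necessity direction of the classical Kemeny--Snell criterion; but arrange the rates and pick $\pi_0$ so that the single conditional distribution $\ga$ on $\{x_1,x_2\}$ is preserved in time (e.g. make the class $\{x_1,x_2\}$ an absorbing-in-aggregate configuration, or use a quasi-stationary choice), giving weak lumpability for that one $\pi_0$. For $PS\not\subseteq CS$ and $PW\not\subseteq CW$ I would invoke the paragraph already in the text stating that the conditions of Theorem~\ref{thm2} are sufficient but not necessary: concretely, a symmetric two-state-in-a-class example where the \emph{incoming} aggregated rates into $\{x_3\}$ happen to coincide even though the per-state rates do not, so lumpability holds by a cancellation that \eqref{q} (resp.~\eqref{q1}) does not see. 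One clean standard witness: a chain that is strongly lumpable because of a symmetry but whose individual rows $R(x_{i_1},\cdot)$ and $R(x_{i_2},\cdot)$ are not equal on some class — this breaks \eqref{q} hence lies in $PS\setminus CS$, and summing shows it also lies in $PW\setminus CW$ after adjusting to fail \eqref{q1} as well, or I give a second small example for the weak case.

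**Parts (2a) and (2b).** These are the heart of the ``counter-intuitive'' remark already flagged in the prose. For (2a), $CW\not\subseteq CS$: take a WLTS satisfying \eqref{q1} with a \emph{non-uniform} witness distribution $\ga$ (so the incoming-rate ratios match only after dividing by $\ga$), and check that the raw column sums $\sum_{x_j\in\xAbs_J}R(x_{i_1},x_j)$ differ across $x_{i_1}\sim x_{i_2}$, so \eqref{q} fails. A two-states-per-class example with asymmetric incoming weights but matching activities does this. For (2b), $CS\not\subseteq CW$: take a WLTS satisfying \eqref{q} but where for \emph{no} choice of $\ga$ the second equation of \eqref{q1} can hold — the obstruction is that \eqref{q1} also forces equality of activities $a(x_{j_1})=a(x_{j_2})$ \emph{and} a rigid proportionality on incoming rates that an asymmetric-in-incoming-edges chain violates; since the text itself notes strong lumpability (via \eqref{q}) does not by itself pin down incoming structure, a small chain that is strongly lumpable yet has genuinely different incoming-rate profiles into the class does the job. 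I would spell out one explicit $3$- or $4$-state rate matrix for each of (2a) and (2b) and verify the relevant (in)equalities by direct substitution.

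**Main obstacle.** The routine implications are trivial; the real effort — and the main obstacle — is producing counterexamples that are simultaneously small enough to check by hand and genuinely separating, especially for (2b), where I must argue a \emph{universal} negative (``no $\ga$ works''). I expect to handle that by reducing the defining equation of \eqref{q1} to a linear system in the unknowns $\ga(\xAbs_J,\cdot)$ and exhibiting rates for which it is inconsistent (while \eqref{q} still holds), which is a finite and transparent computation once the example is fixed.
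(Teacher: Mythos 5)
Your decomposition and the paper's coincide: the three positive implications in (1a) are read off from Thm.~\ref{thm2} and Dfn.~\ref{def:lumpSW} exactly as you describe, and everything else is settled by explicit counterexample. The one structural difference is economy: rather than a separate ad-hoc chain for each non-implication, the paper builds a single seven-state WLTS (Fig.~\ref{pics}) with a base partition $\sim_1\in CS\cap CW$ and coarsens it in two directions --- merging an extra state on the \emph{outgoing} side ($\sim_2$) so that condition~(\ref{q}) visibly fails while a uniform $\ga$ still witnesses~(\ref{q1}), giving (2a); and merging on the \emph{incoming} side ($\sim_3$) so that the two source classes force incompatible conditional distributions on the merged target class, giving (2b). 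That second obstruction is precisely your ``the linear system in $\ga$ is inconsistent'' argument, so your method for the universal negative in (2b) is sound; note also that (2a) does not actually require a non-uniform $\ga$ --- the separation between $CW$ and $CS$ comes from~(\ref{q}) constraining outgoing rates per source state while~(\ref{q1}) constrains $\ga$-normalized incoming rates, and the paper's witness uses $\ga=(1/3,1/3,1/3)$. The genuine shortfall of your write-up is that for (2a), (2b), and the converse failures in (1a) the counterexamples are only characterized by the properties they must have, not exhibited; since the entire mathematical content of those claims is the existence of such witnesses, the proof is incomplete until concrete rate matrices are written down and (\ref{q}), (\ref{q1}) are checked on them. (The paper itself is terse on the converse failures in (1a), extracting $PW\not\subseteq CW$ from $\sim_3$ via $CS\subseteq PS\subseteq PW$, so your plan to supply explicit witnesses for $PW\not\subseteq PS$ and $PS\not\subseteq CS$ would in fact go beyond what is printed.) In short: right approach, correct verification strategies, but the deliverables of parts (2a)--(2b) remain to be executed.
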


\begin{proof}
The statement in the part (1) trivially follows from the Dfn.~\ref{dfn:lump} and Thm.~\ref{thm2}. To show (2a) and (2b), we consider the WLTS ${\cal W}$ specified in the Fig.~\ref{pic1}, with the state space $\S=\{x,y_1,y_2,y_3,z_1,z_2,z_3\}$. Let $\sim_1$ be an equivalence relation on $\S$, such that $y_1 \sim_1 y_2$ and $z_1 \sim_1 z_2$. By lumping the states by $\sim_1$, we get the system ${\WAbs}_1$, as shown in Fig.~\ref{pic2}. It is easy to check that $\sim_1\in CS$.
Moreover, we have that $\sim_1\in CW$, since for 
\[\ga =\left( 
\begin{array}{ccccc}
x & y_{12} & y_3 & z_{12} & z_3 \\
1 & (0.5,0.5) & 1 & (0.5,0.5) & 1  
\end{array} 
\right)
\]
the weak lumpability condition is satisfied, so $\sim_1\in CS\cap CW$. We further lump the states $y_{12}$ and $y_3$, by taking the transitive closure of the relation $\sim_1$ union $(y_1,y_3)$, denoted $\sim_2 =tc(\sim_1\cup (y_1,y_3))$ (Fig.~\ref{pic3}). This lumping is such that $\sim_2\notin CS$ because we have
\begin{align*}
y_1\sim y_3, \hbox{ but } w(y_1,l,z_{12}) > 0, \hbox{ and } w(y_3,l,z_{12})=0.
\end{align*}
On the other hand,  for \[ \ga= \left(
\begin{array}{cccc}
x & y_{123} & z_{12} & z_3 \\
1 & (1/3,1/3,1/3) & (0.5,0.5) & 1  
\end{array} 
\right)
\]
we argue that $\sim_2\in CW$. Therefore, if the initial distribution is in accordance with $\ga$, the abstraction $\WAbs_2$ is sound and complete.

If we rather lump $z_1$ and $z_2$, by $\sim_3=tc(\sim_1\cup (z_1,z_3))$, we get the system $\WAbs_3$ (Fig.~\ref{pic4}). This system is such that $\sim_3\in CS\setminus CW$. More precisely, we cannot find a $\ga$ which would witness $\sim_3\in CW$:  if such a $\ga$ existed, we would  have $\ga(\{x\})(x) = 1$, and consequently $\ga(y_{12}) = (0.5,0.5)$, and $\ga(y_3) = 1$. This implies that the conditional distribution $\ga(z_{123})$ cannot be invariant of time - it will alternate between the distributions $(0,0,1)$ and $(0.5,0.5,0)$, depending on the choice made in $x$. Note that, since $\sim_3\in CS$, it follows that $\sim_3\in PS$, and this implies $\sim_3\in PW$. 
\end{proof}
\begin{figure}[t]
\subfigure[${\cal W}$: the concrete system]%
{\begin{minipage}{0.49\linewidth}
\label{pic1}
\hspace*{0cm}\mbox{\includegraphics[scale=0.2]{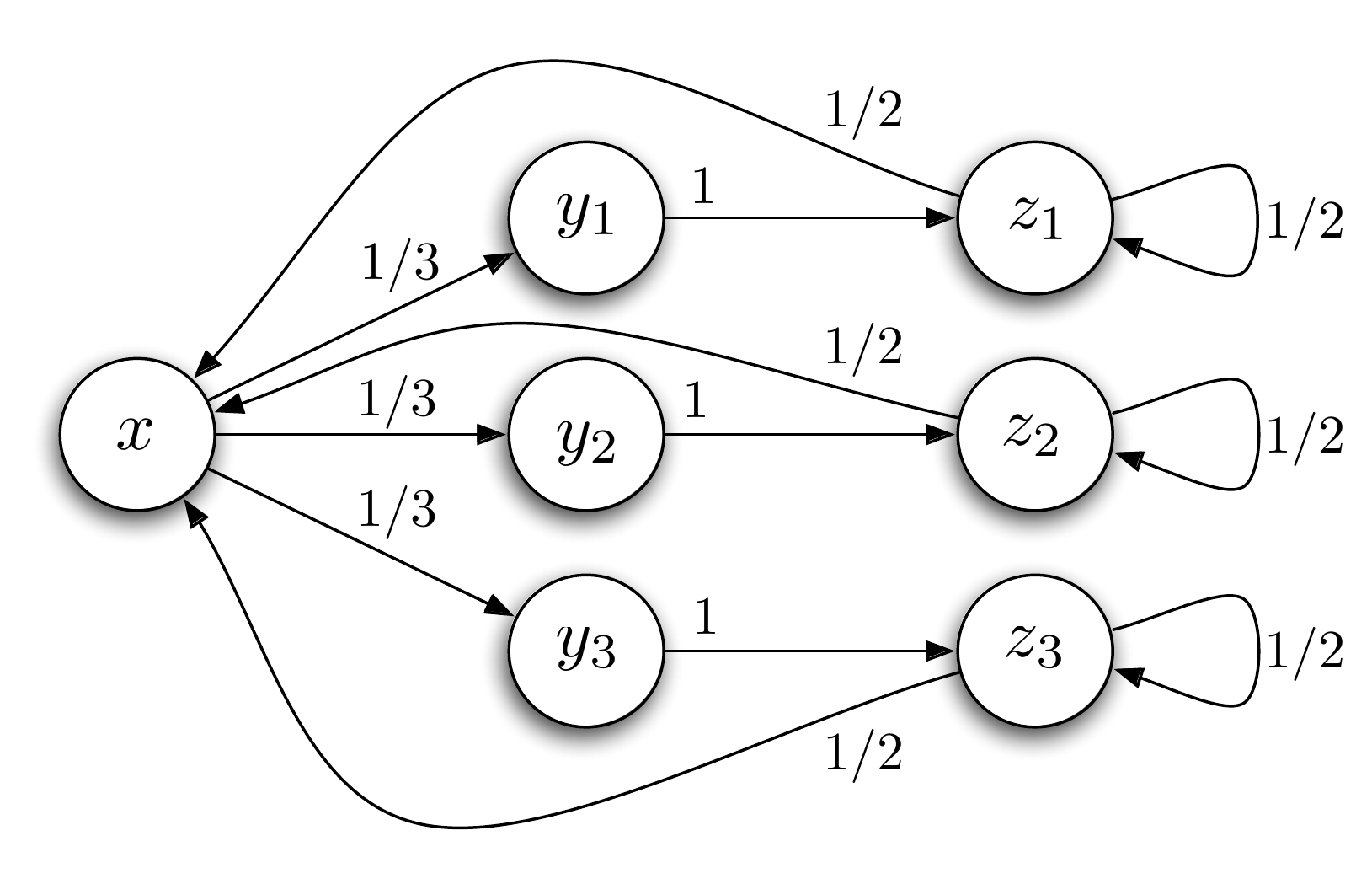}}
\end{minipage}}
\subfigure[$\WAbs_1$; $\sim_1\in PS \cap PW\cap CS\cap CW$]%
{\begin{minipage}{0.49\linewidth}
\label{pic2}
\mbox{\includegraphics[scale=0.2]{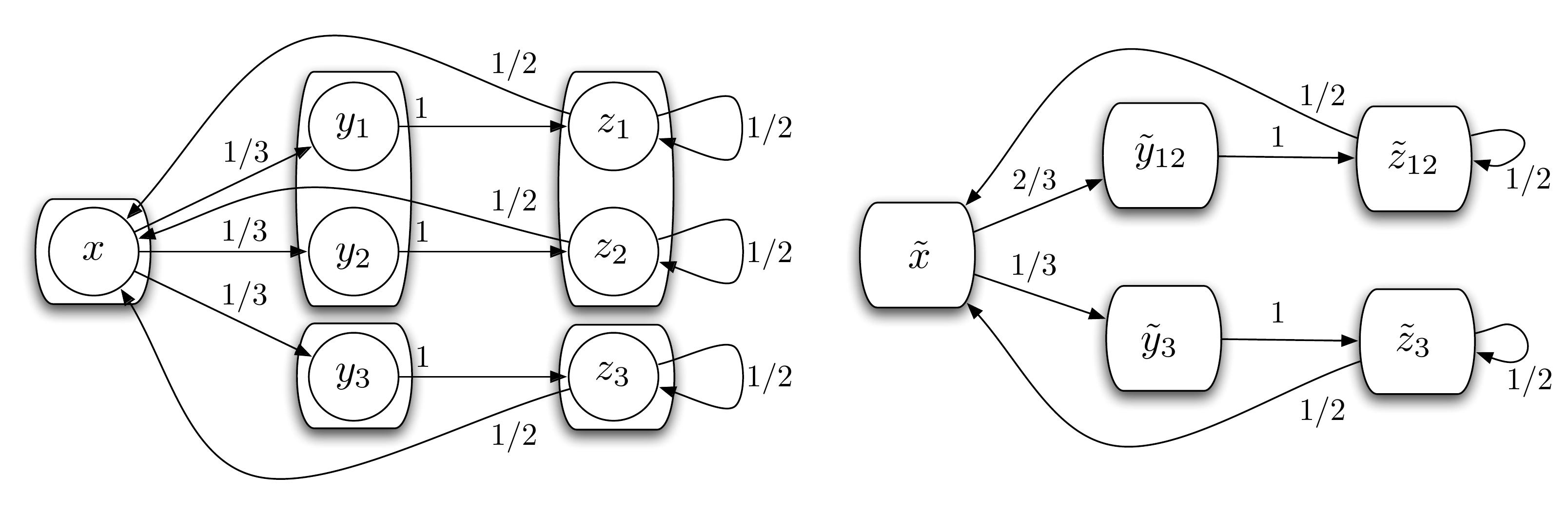}}
\end{minipage}}
\subfigure[$\WAbs_2$; $\sim_2\in CW\setminus CS$]%
{\begin{minipage}{0.49\linewidth}
\label{pic3}
\mbox{\includegraphics[scale=0.2]{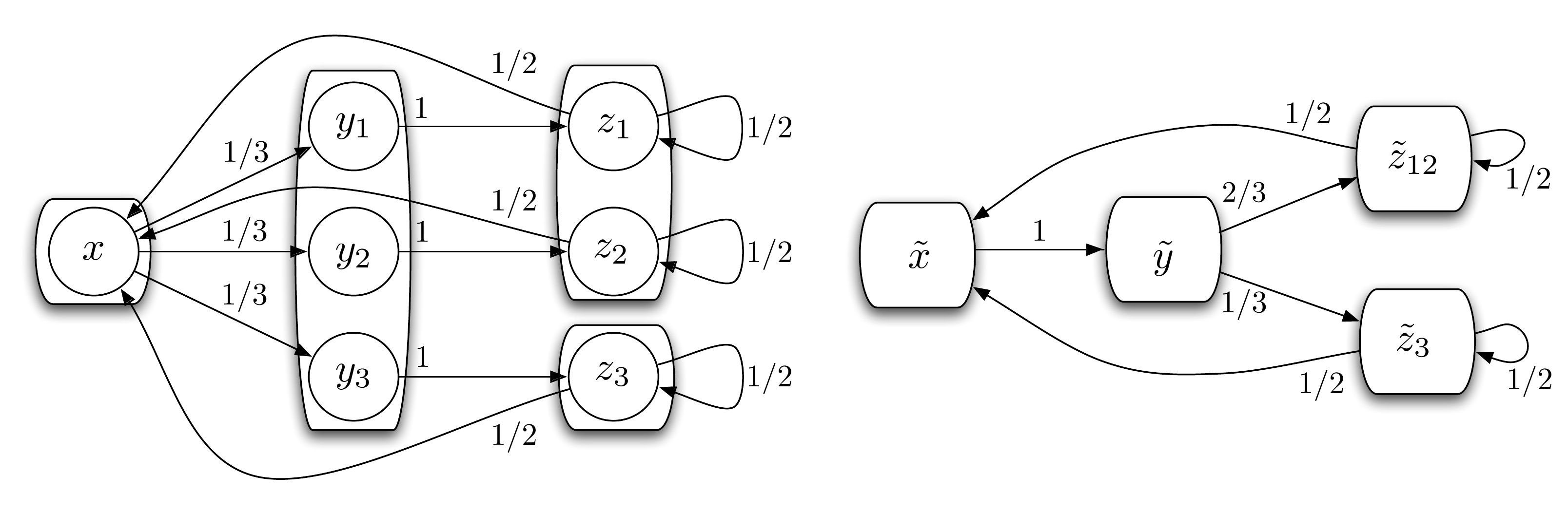}}
\end{minipage}}
\subfigure[$\WAbs_3$; $\sim_3\in CS\setminus CW$]%
{\begin{minipage}{0.49\linewidth}
\label{pic4}
\mbox{\includegraphics[scale=0.2]{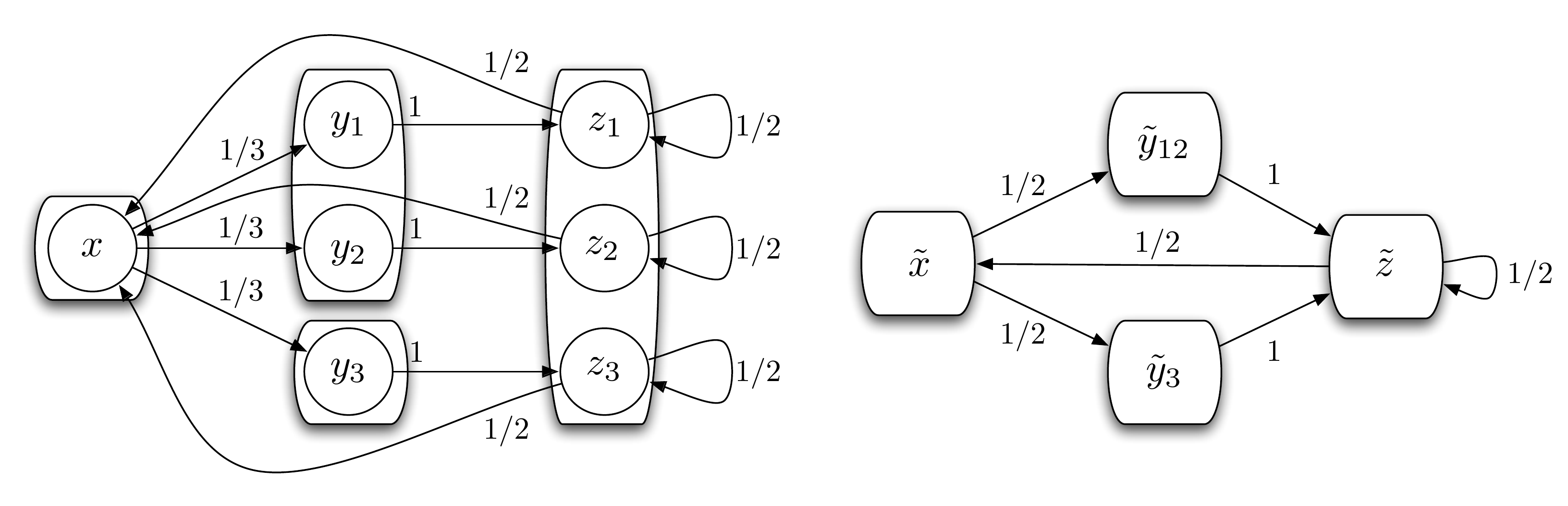}}
\end{minipage}}
\caption{Different abstractions of system ${\cal W}$}
\label{pics}
\end{figure}

This discussion indicates that if we decide to check for weak lumpability instead of for strong by using the characterization from Thm.~\ref{thm2},  it might happen that we eliminate the aggregations that are strongly lumpable. In the case of reductions of Kappa systems, we will use the weak lumpability characterization.

\subsection{Bisimulations}

Aiming to define the algorithm that is abstracting the WLTS of a Kappa system, we start by redefining the lumpability properties in the bisimulation notions. Bisimulation is typically defined by logically characterizing the distinguishing property of the states that may be aggregated.

We define three kinds of bisimulation relations on the WLTS, which are based on the lumpability characterizations given in Thm.~\ref{thm2}. We adopt the terminology of \cite{buchholz_bisimulation}. The \emph{forward} bisimulations arise from the characterization for strong lumpability: the bisimilar states have the same forward behaviour in the sense that they are each targeting any other lumped state with the same total affinity (total outgoing rate). This concept is well established  for dependability or performance analysis \cite{janePhD,holger_PhD}.  What we use in the abstractions of Kappa systems is \emph{backward} bisimulation. The bisimilar states have the same backward behaviour in the sense that they are reached by the predecessors from one lumped state with the same probabilistic quantity, which becomes the rate in the abstract system. It is however less established and only applied in very few approaches for stochastic modelling \cite{susanna}. The \emph{backward uniform} bisimulation is an instance of a backward bisimulation with an additional constraint that only the equally-probable states may be aggregated. 

\begin{definition} 
Given a WLTS ${\cal W} = (\S,\L,w,\pi_0)$, and 
$(\sim,\sim_L)$ a pair of equivalence relations respectively over $\S$ and $\L$, we define a function $\delta_F: {\S}\times {\L}_{/_{\sim_L}}\times \S_{/_{\sim}} \ra \RR_0^+$:
\[\delta_F(x_i,\tilde{l},\tilde{x}_j)  = \sum \{|w(x_i,l,x_j)|\;\mid\; l\in \tilde{l} \hbox{ and } x_j\in \tilde{x}_j\}.\]

Furthermore, given a family of probability distributions over the partitions $\ga\in \Gamma_{\S,{\sim}}$, we define the quantity $\delta_B: {\S}_{/_{\sim}}\times {\L}_{/_{\sim_L}} \times \S \ra \RR_0^+$:
\[\delta_B(\tilde{x}_i,\tilde{l},x_j) = \strutfrac{\sum \{\ga(\xAbs_i,x_i)\cdot |w(x_i,l,x_j)|\;\mid\; l\in \tilde{l},  x_i\in \tilde{x}_i\}}{\ga(\xAbs_j,x_j)}.\]

Specifically, if we have that $\ga$ is a uniform distribution over the equivalence classes, we can express the latter expression in terms of cardinalities of the equivalence classes:
\[\delta_{BU}(\tilde{x}_i,\tilde{l},x_j) = \frac{|\tilde{x}_j|}{|\tilde{x}_i|}\sum \{|w(x_i,l,x_j)|\;\mid\;l\in \tilde{l},  x_i\in \tilde{x}_i\}.\]
\end{definition}

\begin{definition} (Forward and  backward Markov bisimulation) 
Given a WLTS ${\cal W} = (\S,\L,w,\pi_0)$,  and 
$(\sim,\sim_L)$ a pair of equivalence relations respectively over $\S$ and $\L$, we say that $(\sim, \sim_L)$ is a
\begin{myenumerate}
\item 
\textit{Forward Markov Bisimulation}, if for all $x_i$ and $x_j$, the following is satisfied: $x_i \sim x_j$, iff for all equivalence classes $\tilde{x}\in \S_{/_{\sim}}$,$\tilde{l}\in\L_{/_{\sim_L}}$, we have that $a(x_i)=a(x_j)$  and $\delta_F(x_i,\tilde{l},\tilde{x}) = \delta_F(x_j,\tilde{l},\tilde{x})$.

\textit{Remark.} Note that this involves the bisimulation in the classical sense: if $x_i$ has a successor in some class, $x_j$ has it as well, and they are related by appropriate labels (and probabilities in this case).

\item 
\textit{Backward Markov bisimulation}, if for all $x_i$ and $x_j$, there exists an  $\ga\in\Gamma_{\S,\sim}$, such that the following is satisfied: $x_i \sim x_j$,  iff for all equivalence classes $\tilde{x}\in \S_{/_{\sim}}$, $\tilde{l}\in \L_{/_{\sim_L}}$,  we have that $a(x_i)=a(x_j)$ and $\delta_B(\tilde{x},\tilde{l},x_i) = \delta_B(\tilde{x},\tilde{l},x_j)$.

\end{myenumerate}
\end{definition}

\begin{thm} (Forward Markov bisimulation implies sound abstraction) 
Let ${\cal W} = (\S,\L,w,\pi_0)$ be a WLTS. 
If $(\sim,\sim_L)$ induces a forward Markov bisimulation, then for any
aggregates $\tilde{x}_i$, $\tilde{l}$, and $\tilde{x}_j$, we can define
\[
\tilde{w}(\tilde{x}_i,\tilde{l},\tilde{x}_j) = \delta_F(x_i,\tilde{l},\tilde{x}_j).
\]
The so defined abstraction $\tilde{{\cal W}} = (\S_{/_{\sim}},\L_{/_{\sim_L}},\tilde{w},\tilde{\pi_0})$ is sound.
We then say that ${\cal W}$ \emph{refines} $\tilde{\cal W}$ by a forward Markov bisimulation  $(\sim,\sim_L)$, written 
$
{\cal W}\preceq_{F,(\sim,\sim_L)} \tilde{\cal W}.
$
\end{thm}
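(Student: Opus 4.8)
The plan is to show that the so-defined abstract WLTS $\tilde{\cal W}$ satisfies the conditions of the strong-lumpability part of Theorem~\ref{thm2}, and then to propagate that equality through the trace density semantics of Definition~\ref{dfn5}. First I would observe that the defining condition of a forward Markov bisimulation, namely $a(x_i)=a(x_j)$ and $\delta_F(x_i,\tilde l,\tilde x) = \delta_F(x_j,\tilde l,\tilde x)$ for all $x_i\sim x_j$, immediately implies (by summing $\delta_F$ over all label classes $\tilde l$) that $\sum_{x\in\tilde x_J} R(x_i,x) = \sum_{x\in\tilde x_J} R(x_j,x)$ whenever $x_i\sim x_j$, which is exactly condition~(\ref{q}). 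Hence $\sim\in CS$, so by Lemma~\ref{lem1}(1a) and Theorem~\ref{thm2} the chain is strongly lumpable with $\tilde R(\tilde x_I,\tilde x_J) = \sum\{R(x_i,x)\mid x\in\tilde x_J\}$ for any representative $x_i\in\tilde x_I$. I would note this agrees with the proposed $\tilde w$ up to the label refinement, since $\sum_{\tilde l} \tilde w(\tilde x_i,\tilde l,\tilde x_j) = \sum_{\tilde l}\delta_F(x_i,\tilde l,\tilde x_j) = \tilde R(\tilde x_i,\tilde x_j)$.

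Next I would unfold what soundness (Definition~\ref{dfn:sound}) requires: for a cylinder $\tildtauu = \trans[I_1]{\tilde x_0}{\tilde l_1}{\tilde x_1}\ldots\trans[I_k]{\tilde x_{k-1}}{\tilde l_k}{\tilde x_k}$ one must check $\pi(\tildtauu) = \sum\{\pi(\Tauu)\mid \Tauu\in\tildtauu\}$, where the sum ranges over all concrete cylinders refining $\tildtauu$, i.e.\ all choices of $x_0\in\tilde x_0,\dots,x_k\in\tilde x_k$ and all labels $l_i\in\tilde l_i$ with positive weight along the way. Using the formula of Definition~\ref{dfn5}, each $\pi(\Tauu)$ is a product of $\pi_0(x_0)$, factors $w(x_{i-1},l_i,x_i)/a(x_{i-1})$, and exponential terms depending only on the activities $a(x_{i-1})$. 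The key structural fact I would exploit is that the label fully identifies the transition (so for fixed $x_{i-1}$ and $\tilde l_i$, summing over $l_i\in\tilde l_i$ and $x_i\in\tilde x_i$ of $w(x_{i-1},l_i,x_i)$ gives precisely $\delta_F(x_{i-1},\tilde l_i,\tilde x_i) = \tilde w(\tilde x_{i-1},\tilde l_i,\tilde x_i)$) and that this value, together with $a(x_{i-1})$, is independent of which representative $x_{i-1}\in\tilde x_{i-1}$ was chosen. This lets me carry out the sum from the innermost index $k$ outward: at each stage the sum over $x_i\in\tilde x_i$ factors the weight term into $\tilde w(\tilde x_{i-1},\tilde l_i,\tilde x_i)/a(\tilde x_{i-1})$ while leaving a residual ``weight mass'' distributed over the states of $\tilde x_i$ that, by the forward-bisimulation invariance, can be re-summed at the next stage; the exponential factors pull out cleanly since $a$ is constant on each class. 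Formally this is an induction on $k$: the statement is that $\sum\{\pi(\Tauu)\mid \Tauu$ refines the length-$j$ prefix and ends in a fixed $x_j\in\tilde x_j\} $ has a closed form that, when finally summed over $x_j\in\tilde x_j$, reproduces the abstract product.

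Concretely I would organize the induction so that the inductive hypothesis at step $j$ is: $\sum\{\pi(\trans[I_1]{x_0}{l_1}{x_1}\ldots\trans[I_j]{x_{j-1}}{l_j}{x_j})\mid x_0\in\tilde x_0,\dots,x_{j-1}\in\tilde x_{j-1}, l_i\in\tilde l_i\} = \tilde\pi(\trans[I_1]{\tilde x_0}{\tilde l_1}{\tilde x_1}\ldots\trans[I_j]{\tilde x_{j-1}}{\tilde l_j}{\tilde x_j})$ for each fixed $x_j\in\tilde x_j$ — wait, this needs care, since the left side does still depend on $x_j$ through $w(x_{j-1},l_j,x_j)$. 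The cleaner invariant is to sum also over $x_j\in\tilde x_j$ except keep one more layer of bookkeeping: I would instead phrase the hypothesis purely at the level of the cylinder probability after summing over all of $x_0,\dots,x_j$, but then the innermost $w$-factor for the next step depends on $x_j$, so I need the extra fact that $\sum_{x_j\in\tilde x_j} [\text{(mass reaching }x_j\text{)}]\cdot w(x_j,l_{j+1},x_{j+1}) = [\text{abstract mass}]\cdot \delta_F$-type quantity; this is where forward bisimulation is essential and is exactly the place the argument can go wrong. So the \textbf{main obstacle} is handling the nesting of the $w$-factors correctly: because the next transition's weight depends on the concrete current state, a naive term-by-term comparison fails, and one must verify that the \emph{aggregated} mass arriving at a class is what the abstract formula predicts, and crucially that this arriving mass is (or need not be) distributed over the class in any particular way — for soundness of forward bisimulation it turns out not to matter because the outgoing behaviour $\delta_F(\cdot,\tilde l,\tilde x)$ and activity $a(\cdot)$ are constant across the class, so the distribution of arriving mass within the class is irrelevant. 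Making that ``irrelevance'' rigorous in the induction, i.e.\ showing the recursion closes without needing $\ga$, is the technical heart; everything else (the exponential factors, the base case $\pi_0(x_0)$ summing to $\tilde\pi_0(\tilde x_0)$, finiteness of all sums by the finite-branching hypothesis) is routine. Finally I would conclude that $\pi(\tildtauu)=\sum_{\Tauu\in\tildtauu}\pi(\Tauu)$, which is soundness, and record the notation ${\cal W}\preceq_{F,(\sim,\sim_L)}\tilde{\cal W}$.
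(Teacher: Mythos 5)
Your argument is correct. Note that the paper itself states this theorem without proof, so there is nothing to compare against line by line; your direct computation on cylinder-set probabilities is the natural argument and it goes through. The decisive points are exactly the ones you isolate: (i) the activity $a$ and the exponential factors are constant on each $\sim$-class, so they factor out of the sum over concrete refinements of $\tildtauu$; (ii) the innermost sum $\sum_{l_k\in\tilde{l}_k,\,x_k\in\tilde{x}_k} w(x_{k-1},l_k,x_k)=\delta_F(x_{k-1},\tilde{l}_k,\tilde{x}_k)$ is, by forward bisimulation, independent of the representative $x_{k-1}$, so it pulls out as the constant $\tilde{w}(\tilde{x}_{k-1},\tilde{l}_k,\tilde{x}_k)$ and the recursion closes from the inside out without any conditional distribution $\ga$ --- the distribution of arriving mass within a class is indeed irrelevant because all outgoing data are class-constant. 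Your hesitation about the inductive invariant resolves correctly for this reason, and terms with $w=0$ or $\pi_0(x_0)=0$ contribute nothing, so summing over all tuples versus only admissible traces is harmless. The opening appeal to the strong-lumpability condition~(\ref{q}) of Thm.~\ref{thm2} is a valid consistency check but is neither needed for, nor by itself sufficient for, the label-sensitive soundness statement of Dfn.~\ref{dfn:sound}; the trace-level induction you then carry out is what actually proves the claim.
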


\begin{thm} (Backward Markov bisimulation implies sound and complete abstraction) 
Given a WLTS ${\cal W} = (\S,\L,w,\pi_0)$, if $(\sim,\sim_L)$ induces a backward Markov bisimulation with conditional probabilities over the aggregates $\ga\in \Gamma_{\S,{\sim}}$, then for any aggregates $\tilde{x}_i$,$\tilde{l}$, and $\tilde{x}_j$, we can define
\begin{align}
\tilde{w}(\tilde{x}_i,\tilde{l},\tilde{x}_j) = \delta_B(\tilde{x}_i,\tilde{l},x_j).
\label{eqn1}
\end{align}
If $\ga(\tilde x) = \pi_0|_{\tilde{x}}$, then the so defined abstraction $\tilde{{\cal W}} = (\S_{/_{\sim}},\L_{/_{\sim_L}},\tilde{w},\tilde{\pi_0})$ is sound and complete. We then say that ${\cal W}$ refines $\tilde{\cal W}$ by a backward Markov bisimulation $(\sim,\sim_L)$ with conditional distributions $\ga$, written 
${\cal W}\preceq_{B,(\sim,\sim_L),\ga} \tilde{\cal W}$.

In particular, if we know that $\ga$ is uniform, the equation (\ref{eqn1}) becomes $\tilde{w}(\tilde{x}_i,\tilde{l},\tilde{x}_j) = \delta_{BU}(x_i,\tilde{l},\tilde{x}_j)$, written also  ${\cal W}\preceq_{BU,(\sim,\sim_L)} \tilde{\cal W}$.
\end{thm}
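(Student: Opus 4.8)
The plan is to isolate one identity and prove it by induction on the trace length $k$, then read off both soundness and completeness. For an abstract cylinder $\tildtauu\in\tildetset$ ending in $\tilde{x}_k$ and a state $x_k\in\tilde{x}_k$, abbreviate by $\pi(\trans[]{\tildtauu}{}{x_k})$ the quantity $\sum\{\pi(\Tauu)\mid\Tauu\in\tildtauu,\ \Tauu\text{ ends in }x_k\}$. I would prove
\[
\pi(\trans[]{\tildtauu}{}{x_k}) \;=\; \ga(\tilde{x}_k,x_k)\cdot\pi(\tildtauu)\qquad\text{for every such }\tildtauu\text{ and }x_k.
\]
This is exactly the completeness identity of Dfn.~\ref{dfn:complete} (note $\pi(\trans[]{\tildtauu}{}{\tilde{x}_k})=\pi(\tildtauu)$), and soundness (Dfn.~\ref{dfn:sound}) follows by summing it over $x_k\in\tilde{x}_k$, since $\sum_{x_k\in\tilde{x}_k}\ga(\tilde{x}_k,x_k)=1$ while $\sum_{x_k\in\tilde{x}_k}\pi(\trans[]{\tildtauu}{}{x_k})=\sum\{\pi(\Tauu)\mid\Tauu\in\tildtauu\}$. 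The uniform sub-case is immediate: substituting $\ga(\tilde{x},x)=1/|\tilde{x}|$ turns $\delta_B$ into $\delta_{BU}$.

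Two algebraic facts are needed before the induction. First, $\tilde{w}$ is well defined: the defining condition of a backward Markov bisimulation says precisely that $\delta_B(\tilde{x}_i,\tilde{l},\cdot)$ is constant on each $\sim$-class, so the representative $x_j$ in $\tilde{w}(\tilde{x}_i,\tilde{l},\tilde{x}_j)=\delta_B(\tilde{x}_i,\tilde{l},x_j)$ is irrelevant; unfolding $\delta_B$ and using $w\ge 0$ gives the \emph{key identity} $\ga(\tilde{x}_j,x_j)\cdot\tilde{w}(\tilde{x}_i,\tilde{l},\tilde{x}_j)=\sum\{\ga(\tilde{x}_i,x_i)\,w(x_i,l,x_j)\mid x_i\in\tilde{x}_i,\ l\in\tilde{l}\}$ for every $x_j\in\tilde{x}_j$. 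Second, activities are preserved, $\tilde{a}(\tilde{x}_i)=a(x_i)$ for all $x_i\in\tilde{x}_i$: writing $\tilde{w}(\tilde{x}_i,\tilde{l},\tilde{x}_j)$ as the convex combination $\sum_{x_j\in\tilde{x}_j}\ga(\tilde{x}_j,x_j)\,\delta_B(\tilde{x}_i,\tilde{l},x_j)$ of its (equal) values, substituting the key identity, and summing over $\tilde{l}$ and $\tilde{x}_j$ yields $\tilde{a}(\tilde{x}_i)=\sum_{x_i\in\tilde{x}_i}\ga(\tilde{x}_i,x_i)\,a(x_i)$, which collapses to $a(x_i)$ because the bisimulation forces $a$ to be constant on $\sim$-classes and $\sum_{x_i\in\tilde{x}_i}\ga(\tilde{x}_i,x_i)=1$. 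In particular the residence-time factors $e^{-a(x)\cdot\inf(I)}-e^{-a(x)\cdot\sup(I)}$ in the trace densities of Dfn.~\ref{dfn5} coincide at the concrete and abstract levels. (Throughout I tacitly assume $\ga$ is strictly positive on each class; states outside its support are unreachable and can be pruned, so this costs no generality.)

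The induction itself is routine. The base case $k=0$ is $\pi_0(x_0)=\ga(\tilde{x}_0,x_0)\,\tilde{\pi_0}(\tilde{x}_0)$, true because $\ga=\pi_0|_{\S_{/\sim}}$ and $\tilde{\pi_0}(\tilde{x}_0)=\sum_{x\in\tilde{x}_0}\pi_0(x)$. For the step, split $\tildtauu$ of length $k$ into its length-$(k-1)$ prefix $\tildtauu'$, ending in $\tilde{x}_{k-1}$, followed by the step $\trans[I_k]{\tilde{x}_{k-1}}{\tilde{l}_k}{\tilde{x}_k}$. Every $\Tauu\in\tildtauu$ ending in $x_k$ decomposes uniquely --- here the ``label identifies the transition'' and finite-branching hypotheses of Dfn.~\ref{dfn1} are used --- into a prefix in $\tildtauu'$ ending in some $x_{k-1}\in\tilde{x}_{k-1}$ together with a single transition $x_{k-1}\to x_k$ of some label $l_k\in\tilde{l}_k$. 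Factoring $\pi(\Tauu)$ via Dfn.~\ref{dfn5}, using $a(x_{k-1})=\tilde{a}(\tilde{x}_{k-1})$, summing over $x_{k-1}$ and $l_k$, applying the induction hypothesis to each prefix, and collapsing $\sum_{x_{k-1},l_k}\ga(\tilde{x}_{k-1},x_{k-1})\,w(x_{k-1},l_k,x_k)$ to $\ga(\tilde{x}_k,x_k)\,\tilde{w}(\tilde{x}_{k-1},\tilde{l}_k,\tilde{x}_k)$ by the key identity, one obtains
\[
\pi(\trans[]{\tildtauu}{}{x_k})=\ga(\tilde{x}_k,x_k)\cdot\Bigl[\pi(\tildtauu')\cdot\frac{\tilde{w}(\tilde{x}_{k-1},\tilde{l}_k,\tilde{x}_k)}{\tilde{a}(\tilde{x}_{k-1})}\bigl(e^{-\tilde{a}(\tilde{x}_{k-1})\cdot\inf(I_k)}-e^{-\tilde{a}(\tilde{x}_{k-1})\cdot\sup(I_k)}\bigr)\Bigr],
\]
and the bracketed factor is precisely the recursive expansion of $\pi(\tildtauu)$ given by Dfn.~\ref{dfn5}, closing the induction.

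The only genuine difficulty is the bookkeeping of the trace decomposition --- making the sum over concrete traces factor cleanly through the prefix/last-step split --- together with the degenerate case $\ga(\tilde{x}_j,x_j)=0$, a $0/0$ in $\delta_B$, which is why one first reduces to $\ga>0$ on each class. As a shortcut for part of the argument, summing the bisimulation condition over all $\tilde{l}$ reproduces the weak-lumpability hypothesis~(\ref{q1}) of Thm.~\ref{thm2} with the same $\ga$, so the Markov property of the lumped chain, its rate matrix $\tilde{R}=\sum_{\tilde{l}}\tilde{w}$, and the time-invariance of the conditional distribution $\ga$ come for free; the label-refined induction above is what upgrades those conclusions from the lumped CTMC to the level of WLTS cylinder sets, i.e.\ to soundness and completeness in the sense of Dfns.~\ref{dfn:sound}--\ref{dfn:complete}.
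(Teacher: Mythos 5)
Your proof is correct. The paper itself states this theorem without proof (the detailed arguments are deferred to the journal version cited as \cite{journal}), so there is no in-text argument to compare against; but your induction on the trace length, with the key identity $\ga(\tilde{x}_j,x_j)\cdot\tilde{w}(\tilde{x}_i,\tilde{l},\tilde{x}_j)=\sum\{\ga(\tilde{x}_i,x_i)\,w(x_i,l,x_j)\mid x_i\in\tilde{x}_i,\ l\in\tilde{l}\}$ and the preservation of activities, is exactly the natural argument the paper's setup (Dfn.~\ref{dfn5}, Dfns.~\ref{dfn:sound}--\ref{dfn:complete}) calls for, and every step checks out: the base case uses $\ga=\pi_0|_{\tilde x}$, the inductive step factors the last transition out of the trace density and collapses the inner sum via the key identity, and soundness follows from completeness by summing over $x_k$. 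Your handling of the $\ga=0$ degeneracy by pruning unreachable states is a legitimate way to make $\delta_B$ well defined.

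One small caveat on your closing aside: summing the backward-bisimulation condition over $\tilde{l}$ yields $\sum_{x_i\in\tilde{x}_I}\ga(\tilde{x}_I,x_i)R(x_i,x_j)/\ga(\tilde{x}_J,x_j)$, i.e.\ a $\ga$-weighted version of the numerator in condition~(\ref{q1}) of Thm.~\ref{thm2}, whereas (\ref{q1}) as printed uses the unweighted sum $\sum_{x_i\in\tilde{x}_I}R(x_i,x_j)$; the two coincide only when $\ga$ is uniform on the predecessor class. Since that remark is explicitly a shortcut and your main induction is self-contained, this does not affect the validity of the proof, but the claimed equivalence with the weak-lumpability hypothesis should be stated with that qualification.
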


\subsection{Proving bisimulations}

The forward bisimulation relation for abstracting the transition systems with CTMC semantics has been established and used in applications (eg, \cite{janePhD, holger_PhD}).  Moreover, computing the backward uniform bisimulation when $\ga$ is uniform is defined \cite{buchholz_bisimulation, susanna}.  It is based on an alternative characterization of the backward uniform Markov bisimulation, which eases the analysis.

\begin{lemma} (Proving backward uniform Markov bisimulation) \label{lem:main}
Let ${\cal W} = (\S,\L,w,\pi_0)$ be a WLTS and  $(\sim,\sim_L)$ be a pair of equivalence relations respectively over $\S$ and $\L$. For any state $x'\in\S$, and any pair of classes $\xAbs,\lAbs \in\ \S_{/\sim}\times \L_{/\sim_L}$, let us define the set $\mathrm{Pred}(\xAbs,\lAbs,x')$ of transitions from a state in $\xAbs$ to the state $x'$ and with a label in $\lAbs$ as follows:
\[\mathrm{Pred}(\xAbs,\lAbs,x') = \{(x,l)\in\xAbs\times\lAbs \;\mid\; w(x,l,x')>0\}.\]
Assume that: (1) $\pi_0|_{\S_{/\sim}}= \tilde{\pi_0}$, and (2) for any $x'_i,x'_j\in \S$ such that $x'_i \sim x'_j$ and any $\xAbs\in \S_{/\sim}$, $\lAbs\in \L_{/\sim_L}$, there exists a bijective map $\phi$ between $\mathrm{Pred}(\xAbs,\lAbs,x'_i)$ and $\mathrm{Pred}(\xAbs,\lAbs,x'_j)$,  such that for any $(x_i,l_i)\in \mathrm{Pred}(\xAbs,\lAbs,x'_i)$, if $\phi(x_i,l_i)=(x_j,l_j)$, then we have that $w(x_i,l_i,x'_i)=w(x_j,l_j,x'_j)$. 

Then we have that  ${\cal W}$ is the backward uniform bisimulation of the abstraction $\tilde{\cal W} = (\S_{/\sim},\L_{/\sim},\tilde{w},\tilde{\pi_0})$, i.e.~${\cal W}\preceq_{BU,(\sim,\sim_L)} \tilde{\cal W}$.
\end{lemma}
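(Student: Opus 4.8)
The plan is to deduce the lemma from the theorem ``Backward Markov bisimulation implies sound and complete abstraction'' proved above. Unfolding $\preceq_{BU}$ through that theorem, the assertion ${\cal W}\preceq_{BU,(\sim,\sim_L)}\tilde{\cal W}$ says three things: (a) $(\sim,\sim_L)$ is a backward Markov bisimulation whose conditional family $\ga\in\Gamma_{\S,\sim}$ is the uniform distribution on each $\sim$-class; (b) $\tilde w=\delta_{BU}$; and (c) $\ga(\xAbs)=\pi_0|_{\xAbs}$ for every class $\xAbs$. Item (c) is exactly hypothesis~(1) (with $\ga$ uniform), and (b) is automatic once (a) holds, since for uniform $\ga$ the quantity $\delta_B(\xAbs_i,\lAbs,x_j)$ appearing in the theorem collapses to $\delta_{BU}(\xAbs_i,\lAbs,x_j)$. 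So the work is (a): show that for every $x'_i\sim x'_j$ one has $a(x'_i)=a(x'_j)$ and $\delta_{BU}(\xAbs,\lAbs,x'_i)=\delta_{BU}(\xAbs,\lAbs,x'_j)$ for all $\xAbs\in\S_{/\sim}$ and $\lAbs\in\L_{/\sim_L}$; the latter equality is also precisely what makes $\tilde w(\xAbs_i,\lAbs,\cdot)$ well-defined on classes rather than on $\S$.

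The core of the argument is this $\delta_{BU}$-equality, and it is where hypothesis~(2) is used. Fix $x'_i\sim x'_j$ and let $\xAbs'$ be their common $\sim$-class, so $|\xAbs'|$ is one and the same number for the two states; fix $\xAbs$, $\lAbs$. Since weights are nonnegative (Def.~\ref{dfn1}) and zero-weight pairs are excluded from $\mathrm{Pred}$, I would rewrite, for $x'\in\{x'_i,x'_j\}$,
\[
\delta_{BU}(\xAbs,\lAbs,x')=\frac{|\xAbs'|}{|\xAbs|}\sum_{(x,l)\in\mathrm{Pred}(\xAbs,\lAbs,x')} w(x,l,x').
\]
The bijection $\phi$ supplied by hypothesis~(2) between $\mathrm{Pred}(\xAbs,\lAbs,x'_i)$ and $\mathrm{Pred}(\xAbs,\lAbs,x'_j)$ is weight-preserving, so re-indexing the $x'_j$-sum along $\phi$ matches it term by term with the $x'_i$-sum; the prefactors agree because $x'_i$ and $x'_j$ share the class $\xAbs'$. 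Hence the two values of $\delta_{BU}$ coincide. (Summing over $\lAbs$ and writing $R(x,y)=\sum_l w(x,l,y)$ turns this into the uniform-$\ga$ instance of the weak-lumpability condition~(\ref{q1}) of Thm.~\ref{thm2}, which is the underlying reason soundness and completeness hold.)

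The step I expect to need the most care is the activity equality $a(x'_i)=a(x'_j)$: hypothesis~(2) speaks only about transitions \emph{into} $x'_i$ and $x'_j$, whereas $a(\cdot)$ sums transitions \emph{out} of a state, and nothing in (2) ties the two together (one can in fact exhibit a WLTS satisfying (1)--(2) in which $\sim$-equivalent states have different activities). For the general WLTS statement I would therefore add ``$a$ is constant on each $\sim$-class'' to the hypotheses, or fold it into the standing assumptions on the partition. In the sole application of the lemma --- the abstraction of ${\cal W}_{\cal R}$ induced by the ACM, with $\sim$ the relation coming from $\equiv^\sharp$ --- this comes for free: each rule probes at most one $\approx_A$-class of sites per agent type, so an $\equiv^\sharp$-rewriting $E\equiv^\sharp E'$ induces a bijection on the embeddings of every rule's left-hand side, whence $E$ and $E'$ admit the same rule instances and have equal activity. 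Discharging this, together with the side condition of Thm.~\ref{thm:main} relating $\pi_0$ to the automorphism counts $\nauto{M}$, closes the gap; the bisimulation theorem then applies and yields ${\cal W}\preceq_{BU,(\sim,\sim_L)}\tilde{\cal W}$.
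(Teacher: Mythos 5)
The paper itself states this lemma without proof (the details are deferred to the journal version), so there is no in-text argument to compare against; judged on its own terms, your proposal is the natural and correct unfolding of the definitions, and it does the right computation. Rewriting $\delta_{BU}(\xAbs,\lAbs,x')$ as $\frac{|\xAbs'|}{|\xAbs|}\sum_{(x,l)\in\mathrm{Pred}(\xAbs,\lAbs,x')}w(x,l,x')$ and transporting the sum along the weight-preserving bijection $\phi$ is exactly what hypothesis~(2) is designed for, and your reading of hypothesis~(1) as ``$\pi_0$ conditioned on each $\sim$-class is the uniform $\ga$'' is the only type-correct interpretation of the condition $\pi_0|_{\S_{/\sim}}=\tilde{\pi_0}$. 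Your main critical observation is also right: the definition of backward (uniform) Markov bisimulation demands $a(x'_i)=a(x'_j)$ for $\sim$-equivalent states, and nothing in hypotheses (1)--(2) constrains \emph{outgoing} weights, so equal activity is a genuinely missing hypothesis of the lemma as stated (a two-successor example where one of two $\sim$-equivalent, identically-reached states has an extra outgoing transition breaks it, and with it the soundness of the residence-time distributions). Adding ``$a$ is constant on $\sim$-classes'' as a hypothesis, and discharging it in the Kappa application via the bijection that $\equiv^\sharp$ induces on rule embeddings, is the correct repair. One minor point you could flag but need not fix: the paper's definition of backward bisimulation uses ``iff'', which read literally would also require the converse implication (states satisfying the $\delta_B$ and activity conditions must be $\sim$-related); like essentially all such statements in the literature, the intended reading is the one-directional one you prove.
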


On the other hand, as soon as $\gamma$ over the aggregates is not uniform, we cannot observe the bijection between  predecessors over the states. Proving that the given abstraction is a backward bisimulation cannot be established unless we have a right 'guess' of the distributions $\ga$. Lem.~\ref{lem4} states how to avoid proving backward bisimulation by instead proving two uniform backward bisimulations. More precisely, if we want to prove the backward refinement between the systems  ${\cal W}$ and ${\WAbs}$, it is enough to observe the system  ${\cal W}^i$, which is a backward uniform refinement of both of the systems ${\cal W}$ and ${\WAbs}$ (Fig.~\ref{fig:lem3}).

\begin{lemma} (Proving backward Markov bisimulation) \label{lem4}
Consider a WLTS ${\cal W}=(\S,\L,w,\pi_{0})$, and any aggregation relation $(\sim,\sim_L)$, such that $\tilde{\cal W}=(\S_{/\sim},\L_{/\sim_L},\tilde{w},\tilde{\pi}_0)$. We assume that there exist a system ${\cal W}^i = (\S^i,\L^i,w^i,\pi_0^i)$, and the two pairs of equivalence relations $(\sim_1,\sim_{L1})$, $(\sim_2,\sim_{L2})$, such that ${\cal W}^i\preceq_{BU,(\sim_1,\sim_{L1})} {\cal W}$, ${\cal W}^i\preceq_{BU,(\sim_2,\sim_{L2})} \WAbs$, $\sim_1\preceq \sim_2$ (in the sense that, for any $x^i_1,x^i_2\in \S^i$,  $x^i_{1}\sim_1 x^i_{2} \Rar x^i_{1}\sim_2 x^i_{2}$), $\sim_{L1} \preceq \sim_{L2}$, and, for any $x^i_1,x^i_2$ such that $x^i_1 \sim_2 x^i_2$, the number of states which are $\sim_{L1}$- equivalent to $x^i_1$ is equal to the number of states which are $\sim_{L1}$- equivalent to $x^i_2$. Under this assumption, we have that ${\cal W}\preceq_{B,(\sim,\sim_L),\ga} \WAbs$, where $\ga$ is defined as 
\[
\ga(\tilde{x},x) = \frac{|\{x^i\in \S^i\;\mid\; x^i\sim_1 x_0^i\}|}  {|\{x^i\in {\S}^i\;\mid\; x^i\sim_2 {x}_0^i\}|}, \hbox{ for any } [x_0^i]_{\sim 1} =x.
\]
\end{lemma}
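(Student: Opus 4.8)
The plan is to transfer the backward-uniform refinements along ${\cal W}^i$ into a single backward refinement between ${\cal W}$ and $\WAbs$, using the ``guessed'' $\ga$ defined in the statement. First I would unfold the definitions: since ${\cal W}^i\preceq_{BU,(\sim_1,\sim_{L1})}{\cal W}$ and ${\cal W}^i\preceq_{BU,(\sim_2,\sim_{L2})}\WAbs$, both abstractions are obtained by the uniform backward construction, so $w(x,l,x')$ (resp.~$\tilde w$) is $\delta_{BU}$ computed on ${\cal W}^i$ with respect to $(\sim_1,\sim_{L1})$ (resp.~$(\sim_2,\sim_{L2})$); concretely each transition weight in ${\cal W}$ or $\WAbs$ is a sum of weights $w^i(x^i,l^i,x'^i)$ over a preimage block divided by the cardinality of the appropriate source class. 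Because $\sim_1\preceq\sim_2$ and $\sim_{L1}\preceq\sim_{L2}$, every $\sim_2$-class of states (resp.~$\sim_{L2}$-class of labels) of $\S^i$ is a disjoint union of $\sim_1$-classes, so the states of $\S^i/\!\sim_1$ are exactly the states of $\S$ and the states of $\S^i/\!\sim_2$ are exactly the states of $\WAbs$; this gives a well-defined state map $\S\to\tilde\S$ which one checks coincides with the abstraction map $\sim$ underlying $\tilde{\cal W}$, and likewise for labels. I would then verify that $\ga$ as defined is a legitimate element of $\Gamma_{\S,\sim}$, i.e.~that for each $\tilde x$ the values $\ga(\tilde x,x)$ sum to $1$ over $x\in\tilde x$: this is immediate because the numerators are the sizes of the $\sim_1$-classes partitioning a fixed $\sim_2$-class and the denominator is the size of that $\sim_2$-class.

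Next I would check the activity condition: for $x_1,x_2$ with $x_1\sim x_2$ we need $a(x_1)=a(x_2)$ in ${\cal W}$; this follows because ${\cal W}^i\preceq_{BU,(\sim_1,\sim_{L1})}{\cal W}$ is sound and complete (Thm.~on backward bisimulation), hence in particular a backward uniform bisimulation, so all states in a common $\sim_2$-refining block have equal activity by Lem.~\ref{lem:main}'s hypothesis. The heart of the argument is the rate-matrix identity defining $\delta_B(\tilde x,\tilde l,x_j)$ on ${\cal W}$ with weights $\ga$: I would compute $\delta_B$ by substituting $w=\delta_{BU}^{(\sim_1,\sim_{L1})}$ and the explicit $\ga$, and show it equals $\tilde w(\tilde x,\tilde l,\tilde x_j)=\delta_{BU}^{(\sim_2,\sim_{L2})}$ computed directly on ${\cal W}^i$. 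The key cancellation is that, for $x_i\in\tilde x_i$ and $l\in\tilde l$,
\[
\ga(\tilde x_i,x_i)\cdot |w(x_i,l,x_j)| \;=\; \frac{|[x_i]_{\sim_1}|}{|[x_i]_{\sim_2}|}\cdot \frac{|[x_j]_{\sim_1}|}{|[x_i]_{\sim_1}|}\!\!\sum_{\substack{x^i\in[x_i]_{\sim_1}\\ l^i\in[l]_{\sim_{L1}}}}\!\! |w^i(x^i,l^i,x'^i)|,
\]
so the $|[x_i]_{\sim_1}|$ factors cancel, and after summing over all $x_i\in\tilde x_i$ and $l\in\tilde l$ and dividing by $\ga(\tilde x_j,x_j)=|[x_j]_{\sim_1}|/|[x_j]_{\sim_2}|$ one is left with $\frac{1}{|[x_i]_{\sim_2}|}$ times a sum of $w^i$-weights over the $\sim_2$-block of predecessors — which is precisely $\tilde w$ from the uniform backward construction on ${\cal W}^i$ relative to $(\sim_2,\sim_{L2})$. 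This computation only goes through because the $w^i$-weights appearing are independent of the representative $x_i$, which is where I would invoke Lem.~\ref{lem:main}(2) for the refinement ${\cal W}^i\preceq_{BU,(\sim_1,\sim_{L1})}{\cal W}$.

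The main obstacle I anticipate is bookkeeping the label side correctly: one must show that the ``extra'' hypothesis — that $\sim_2$-equivalent states of $\S^i$ have the same number of $\sim_{L1}$-related labels — is exactly what makes the label-class cardinalities drop out of the double sum so that the two $\delta_B$/$\delta_{BU}$ expressions match. Intuitively, when we enlarge $\sim_1$ to $\sim_2$ we merge label classes, and the uniform averaging over labels must be consistent between the two levels; this counting hypothesis guarantees that the per-label contributions are apportioned uniformly and hence reassemble correctly after the coarsening. Once that identity is established, $(\sim,\sim_L)$ satisfies the definition of a backward Markov bisimulation with the given $\ga$, and since hypothesis $\tilde\pi_0 = \pi_0|_{\S_{/\sim}}$ is inherited from $\pi_0^i|_{\cdot}$ through both uniform refinements (condition (1) of Lem.~\ref{lem:main} chained twice), the backward-bisimulation theorem yields ${\cal W}\preceq_{B,(\sim,\sim_L),\ga}\WAbs$, completing the proof.
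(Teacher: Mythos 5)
The paper never actually proves Lemma~\ref{lem4}: it only motivates it via Fig.~\ref{fig:lem3} and defers the details elsewhere, so there is no official argument to compare against. Your plan is nevertheless the natural one, and its central computation is correct: writing $w$ as $\delta_{BU}$ of ${\cal W}^i$ with respect to $(\sim_1,\sim_{L1})$ and substituting the stated $\ga$, the factors $|[x_i]_{\sim_1}|$ cancel, and after summing the $\sim_1$-blocks inside a $\sim_2$-block the quantity $\delta_B(\tilde x,\tilde l,x_j)$ collapses to $\frac{|[x_j]_{\sim_2}|}{|[x_i]_{\sim_2}|}\sum w^i$, i.e.\ to $\delta_{BU}$ of ${\cal W}^i$ with respect to $(\sim_2,\sim_{L2})$, which is $\tilde w$ by hypothesis; independence of the representative $x_j$ (the bisimulation condition) is then inherited from ${\cal W}^i\preceq_{BU,(\sim_2,\sim_{L2})}\WAbs$. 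Two points in your write-up are loose. First, your justification of the activity condition via the predecessor bijections of Lem.~\ref{lem:main} is misplaced: those bijections control \emph{incoming} transitions, whereas $a(\cdot)$ is an outgoing quantity; the correct route is that the definition of backward Markov bisimulation already forces $a(x^i_1)=a(x^i_2)$ for $\sim_2$-equivalent states of ${\cal W}^i$, combined with a one-line check that the backward-uniform quotient preserves activity, $a_{\cal W}([x^i]_{\sim_1})=a_{{\cal W}^i}(x^i)$. Second, you leave the role of the cardinality hypothesis on $\sim_{L1}$-classes vague; this is excusable because that hypothesis is garbled in the statement itself (it speaks of ``states'' being $\sim_{L1}$-equivalent), but note it cannot mean equal sizes of the $\sim_1$-state-classes inside a $\sim_2$-class, since that would force $\ga$ to be uniform and contradict the paper's worked example with $\ga(\tilde y,y)=1/3$ and $\ga(\tilde y,y')=2/3$; pinning down the label-side version you gesture at is the one genuinely open step in your plan.
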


This Lemma contains the key observation  for the abstraction of Kappa systems, and for proving Thm.\ref{thm:main}. It thus completes the intention of the theoretical analysis in this paper. More precisely, we observe the WLTS ${\cal W}^{\cal R}$ of a given Kappa system ${\cal R}$, as defined in Dfn.~\ref{dfn:WLTSKappa} and its abstraction generated as proposed in the reduction procedure (Sec.~\ref{sec:reduction}, Dfn.~\ref{dfn:WLTSKappaAbs}). The main observation is that the system ${\cal W}$ is already an abstraction. More concretely, the states of ${\cal W}$ are multisets of species, and as such, they abstract the \emph{individual} species. For example, a state that contains two agents of type $A(s_u)$ abstracts away the potential individual behavior of these two agents, for example $A_1(s_u)$ and $A_2(s_u)$. To show that the abstraction is sound and complete, we observe the system ${\cal W}^i$, which is the \emph{individual-based} semantics of a Kappa system, where each individual agent is uniquely identified. The backward uniform refinement is established between ${\cal W}^i$ and ${\cal W}$ by the modeling assumptions. We are left to prove the backward uniform refinement between ${\cal W}^i$ and $\WAbs$. This is done by inspections on the ACM's (Dfn.~\ref{dfn:ACM}).

\begin{figure}\label{fig:lem3}
\hspace*{4cm} \includegraphics[scale=0.3]{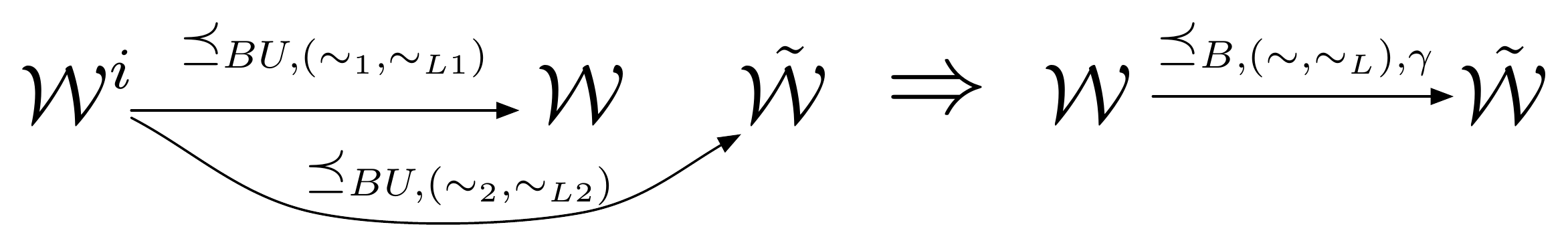} 
\caption{Proving backward refinement}
\end{figure}

\subsection{Example}

We consider the following  Kappa system. We have the agent types ${\cal A}=\{A,B\}$, the site names $\{s,t\}$, the signatures $\Sigma_\iota(A) = \Sigma_\iota(B) = \{s\}$ and $\Sigma_\lambda(A)= \Sigma_\lambda(B) =  \{t\}$,  the alphabet of internal states ${\mathbb I} =\{u,p\}$. The contact map is defined by $(\cal N,\cal E)$, such that ${\cal N} = \{(A,s),(A,t),(B,s),(B,t)\}$ and ${\cal E} = \{((A,t),(B,t))\}$ and the following rules:
\begin{align*}
r1 :\; & \agent{A}{\site{s}{u}{}} \lrar \agent{A}{\site{s}{p}{}} @ k_1,k_{1-} \\
r2 :\;& \agent{B}{\site{s}{u}{}} \lrar \agent{B}{\site{s}{p}{}} @ k_2,k_{2-} \\
r3 :\;& \agent{A}{\site{t}{}{}}\agentsep\agent{B}{\site{t}{}{}} \lrar 
\agent{A}{\site{t}{}{1}}\agentsep\agent{B}{\site{t}{}{1}} @ k_3,k_{3-} 
\end{align*}
Moreover, using the minimal ACM for annotating the agents, as written in Dfn.~\ref{dfn:ACM}, we get that $\approx_A$ has two equivalence classes $\{s\}$ and $\{t\}$; and that $\approx_B$ has two equivalence classes $\{s\}$ and $\{t\}$ as well. 

The fragments derived from an ACM (Dfn.~\ref{dfn14}) are the following: $F_1= \agent{A}{\site{s}{u}{}}$, $F_2= \agent{A}{\site{s}{p}{}}$, $F_3= \agent{A}{\site{t}{}{}}$,  $F_4= \agent{A}{\site{t}{}{1}}\agentsep \agent{B}{\site{t}{}{1}}$, $F_5= \agent{B}{\site{s}{u}{}}$,  $F_6= \agent{B}{\site{s}{p}{}}$, $F_7= \agent{B}{\site{t}{}{}}$. 

Let us pick a (finite) initial distribution $\pi_0$. Now we observe the WLTS ${\cal W} = (\S,\L,w,\pi_0)$ assigned to the Kappa system ${\cal R}_{AB}$ (introduced in Dfn.~\ref{dfn9}), and the state $y$ which is the $\equiv$-equivalence class of the mixture $E_y$ defined as follows:
\[\text{\agent{A}{\!\site{s}{p}{},\site{t}{}{1}}\!\agentsep\!\agent{B}{\!\site{s}{p}{},\site{t}{}{1}}\!\agentsep\!\agent{A}{\!\site{s}{u}{},\site{t}{}{2}}\!\agentsep\!\agent{B}{\!\site{s}{u}{},\site{t}{}{2}}\!\agentsep\!\agent{A}{\!\site{s}{u}{},\site{t}{}{3}}\!\agentsep\!\agent{B}{\!\site{s}{u}{},\site{t}{}{3}}}\]
The unique (up to $\equiv)$ non $\equiv^\sharp$-equivalent mixtures is $E_{y'}$ which is defined as follows:
\[\text{\agent{A}{\!\site{s}{p}{},\site{t}{}{1}}\!\agentsep\!\agent{B}{\!\site{s}{u}{},\site{t}{}{1}}\!\agentsep\!\agent{A}{\!\site{s}{u}{},\site{t}{}{2}}\!\agentsep\!\agent{B}{\!\site{s}{p}{},\site{t}{}{2}}\!\agentsep\!\agent{A}{\!\site{s}{u}{},\site{t}{}{3}}\!\agentsep\!\agent{B}{\!\site{s}{u}{},\site{t}{}{3}}}\]
We denote $y'=[E_{y'}]_{\equiv}$, $\tilde{y'}=[E_{y'}]_{\equiv^\sharp}$. We compute however that the distribution among state $\tilde{y}=[E_y]_{\equiv^\sharp}$. 

We have: $\ga(\tilde{y},y)=1/3$ and $\ga(\tilde{y},y')=2/3$. 
Roughly speaking this comes from the fact that if we annotate fragments of type $A$ and $B$ in $\tilde{y}$ with the identifiers $1$, $2$, $3$ (there are $36$ possible annotation), and if we assume that agents with the same identifiers are bound together. Then there are $12$ annotations so that the phosphorilated $A$ and $B$ are bound together, and $24$ where this is not the case. 
A more detailed analysis of this model is given in \cite{journal}.

\section{Case study}\label{sec:crosstalk}

In this section, we apply our framework to a case study. We have indeed refactored in Kappa the model of a crosstalk between the \agentEGF{} rececptor and the Insulin receptor pathways which is described in \cite{conzelmann2008}. In this model, two kinds of receptors, \agentEGF{} receptor (\agentEGFR{}) and insulin receptor (\agentIR{}) can recruit a protein called \agentSos{}, which can be phosphorylated, or not. Each kind of receptor has its own pathway, and these two pathways shared some common proteins.

The contact map is given in Fig.~\ref{CM}. One can notice that some sites can be bound to several other sites, which denotes a competition (concurrency). Moreover, the site $d$ of a \agentEGF{} receptor (\agentEGFR{}) can be bound to the site $d$ of another \agentEGFR{} (since there is a loop in the CM). Moreover rules are given in Table \ref{ruleset}. We do not give the rate for rules, but we assume no hypotheses on the rates (rates are parameters, some of them might be equal, or not). Moreover, each rule is reversible.

\begin{figure}[t]
{\begin{minipage}{\linewidth}
\begin{center}
\scalebox{0.5}{\begin{minipage}{\linewidth}
\input{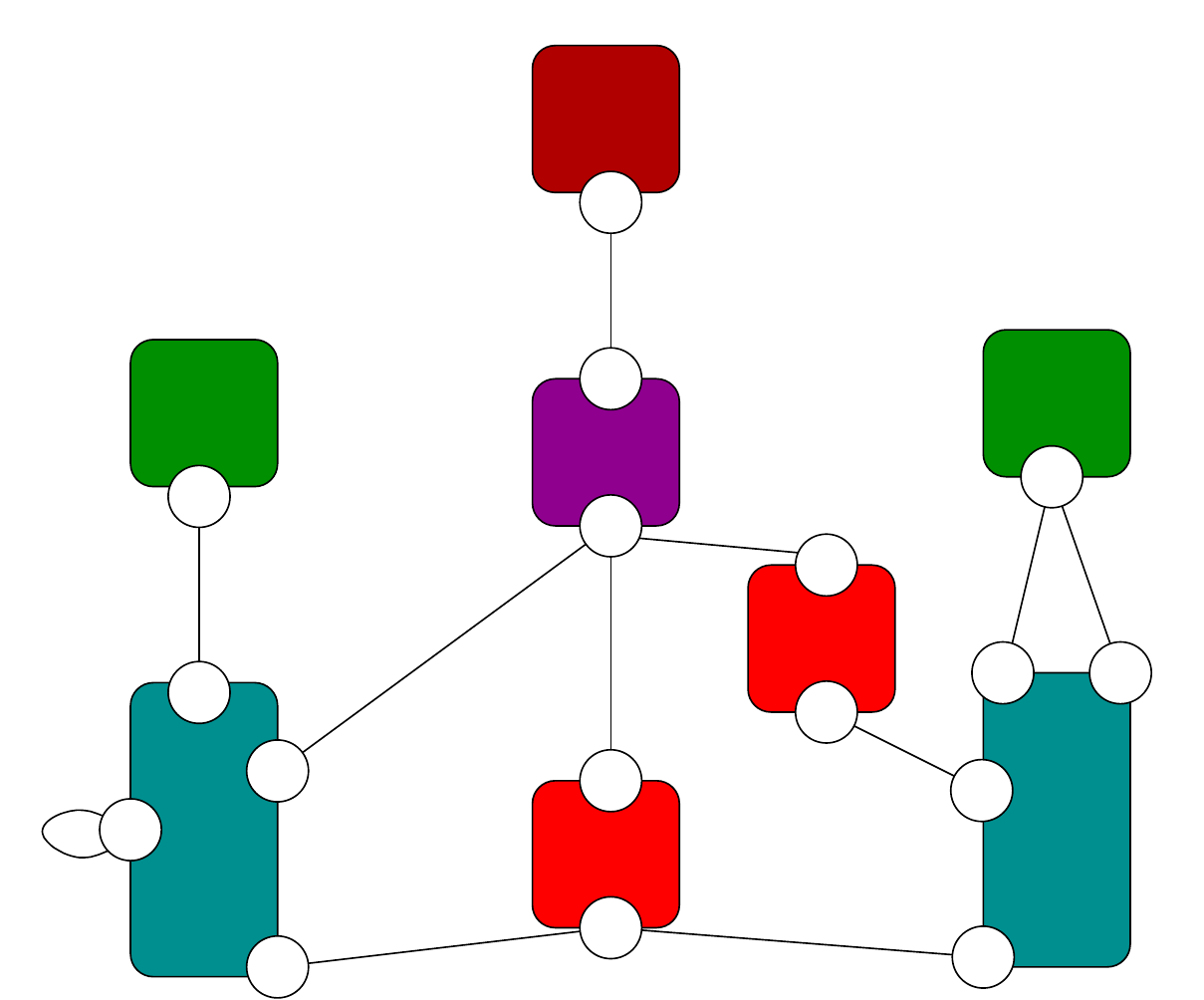_t}
\end{minipage}}
\end{center}
\end{minipage}}
\caption{Contact map for the \emph{EGFR}/Insulin crosstalk.}
\label{CM}
\end{figure}

{\begin{table}[thp]
\begin{minipage}{0.49\linewidth}
\textmode                                    
\putrulelabels
\changeruleprefix{r} 
\scalebox{0.73}
{\begin{minipage}{1.4\linewidth}
\firstsep
\agentEGF{\sitea{}{}}\agentsep\agentEGFR{\sitea{}{}\sitesep\sited{}{}}\revrule\agentEGF{\sitea{}{1}}\agentsep\agentEGFR{\sitea{}{1}\sitesep\sited{}{}}\sep
\agentEGF{\sitea{}{}}\agentsep\agentEGFR{\sitea{}{}\sitesep\sited{}{\bound{1}}}\revrule\agentEGF{\sitea{}{1}}\agentsep\agentEGFR{\sitea{}{1}\sitesep\sited{}{\bound{2}}}\sep
\agentEGFR{\sitea{}{}\sitesep\sited{}{}}\agentsep\agentEGFR{\sitea{}{\bound{1}}\sitesep\sited{}{}}\revrule\agentEGFR{\sitea{}{}\sitesep\sited{}{1}}\agentsep\agentEGFR{\sitea{}{\bound{2}}\sitesep\sited{}{1}}\sep
\agentEGFR{\sitea{}{}\sitesep\sited{}{}}\agentsep\agentEGFR{\sitea{}{}\sitesep\sited{}{}}\revrule\agentEGFR{\sitea{}{}\sitesep\sited{}{1}}\agentsep\agentEGFR{\sitea{}{}\sitesep\sited{}{1}}\sep
\agentEGFR{\sitea{}{\bound{1}}\sitesep\sited{}{}}\agentsep\agentEGFR{\sitea{}{\bound{2}}\sitesep\sited{}{}}\revrule\agentEGFR{\sitea{}{\bound{2}}\sitesep\sited{}{1}}\agentsep\agentEGFR{\sitea{}{\bound{3}}\sitesep\sited{}{1}}\sep
\agentEGFR{\siteb{u}{}\sitesep\sited{}{}}\revrule\agentEGFR{\siteb{p}{}\sitesep\sited{}{}}\sep 
\agentEGFR{\siteb{u}{}\sitesep\sited{}{\bound{1}}}\revrule\agentEGFR{\siteb{p}{}\sitesep\sited{}{\bound{1}}}\sep
\agentEGFR{\siteb{p}{}}\agentsep\agentShc{\sitea{}{}}\revrule\agentEGFR{\siteb{p}{1}}\agentsep\agentShc{\sitea{}{1}}\sep
\agentEGFR{\siteb{p}{1}\sitesep\sited{}{}}\agentsep\agentShc{\sitea{}{1}\sitesep\siteb{u}{}}\revrule\agentEGFR{\siteb{p}{1}\sitesep\sited{}{}}\agentsep\agentShc{\sitea{}{1}\sitesep\siteb{p}{}}\sep
\agentEGFR{\siteb{p}{1}\sitesep\sited{}{\bound{2}}}\agentsep\agentShc{\sitea{}{1}\sitesep\siteb{u}{}}\revrule\agentEGFR{\siteb{p}{1}\sitesep\sited{}{\bound{2}}}\agentsep\agentShc{\sitea{}{1}\sitesep\siteb{p}{}}\sep
\agentGrb{\sitea{}{}}\agentsep\agentShc{\siteb{p}{}}\revrule\agentGrb{\sitea{}{1}}\agentsep\agentShc{\siteb{p}{1}}\sep
\agentEGFR{\sitec{u}{}\sitesep\sited{}{}}\revrule\agentEGFR{\sitec{p}{}\sitesep\sited{}{}}\sep
\agentEGFR{\sitec{u}{}\sitesep\sited{}{\bound{1}}}\revrule\agentEGFR{\sitec{p}{}\sitesep\sited{}{\bound{1}}}\sep
\agentEGFR{\sitec{p}{}\sitesep\sited{}{}}\agentsep\agentGrb{\sitea{}{}}\revrule\agentEGFR{\sitec{p}{1}\sitesep\sited{}{}}\agentsep\agentGrb{\sitea{}{1}}\sep
\agentEGFR{\sitec{p}{}\sitesep\sited{}{\bound{1}}}\agentsep\agentGrb{\sitea{}{}}\revrule\agentEGFR{\sitec{p}{1}\sitesep\sited{}{\bound{2}}}\agentsep\agentGrb{\sitea{}{1}}\sep
\agentIR{\sitea{}{}\sitesep\siteb{}{}}\agentsep\agentIns{\sitea{}{}}\revrule\agentIR{\sitea{}{1}\sitesep\siteb{}{}}\agentsep\agentIns{\sitea{}{1}}\sep
\agentIR{\sitea{}{}\sitesep\siteb{}{\bound{1}}}\agentsep\agentIns{\sitea{}{}}\revrule\agentIR{\sitea{}{1}\sitesep\siteb{}{\bound{2}}}\agentsep\agentIns{\sitea{}{1}}\sep
\agentIR{\sitea{}{}\sitesep\siteb{}{}}\agentsep\agentIns{\sitea{}{}}\revrule\agentIR{\sitea{}{}\sitesep\siteb{}{1}}\agentsep\agentIns{\sitea{}{1}}\sep
\agentIR{\sitea{}{\bound{1}}\sitesep\siteb{}{}}\agentsep\agentIns{\sitea{}{}}\revrule\agentIR{\sitea{}{\bound{2}}\sitesep\siteb{}{1}}\agentsep\agentIns{\sitea{}{1}}

\end{minipage}}
\end{minipage}
\begin{minipage}{0.49\linewidth}
\scalebox{0.73}
{\begin{minipage}{1.4\linewidth}
\textmode                                    
\putrulelabels
\changeruleprefix{r} 
\sep
\agentIR{\sitea{}{}\sitesep\siteb{}{}\sitesep\sitec{u}{}}\revrule\agentIR{\sitea{}{}\sitesep\siteb{}{}\sitesep\sitec{p}{}}\sep
\agentIR{\sitea{}{\bound{1}}\sitesep\siteb{}{}\sitesep\sitec{u}{}}\revrule\agentIR{\sitea{}{\bound{1}}\sitesep\siteb{}{}\sitesep\sitec{p}{}}\sep
\agentIR{\sitea{}{}\sitesep\siteb{}{\bound{1}}\sitesep\sitec{u}{}}\revrule\agentIR{\sitea{}{}\sitesep\siteb{}{\bound{1}}\sitesep\sitec{p}{}}\sep
\agentIR{\sitea{}{\bound{1}}\sitesep\siteb{}{\bound{2}}\sitesep\sitec{u}{}}\revrule\agentIR{\sitea{}{\bound{1}}\sitesep\siteb{}{\bound{2}}\sitesep\sitec{p}{}}\sep
\agentIR{\sitec{p}{}}\agentsep\agentShc{\sitea{}{}}\revrule\agentIR{\sitec{p}{1}}\agentsep\agentShc{\sitea{}{1}}\sep
\agentIR{\sitea{}{\bound{2}}\sitesep\siteb{}{\bound{3}}\sitesep\sitec{}{1}}\agentsep\agentShc{\sitea{}{1}\sitesep\siteb{u}{}}\revrule\agentIR{\sitea{}{\bound{2}}\sitesep\siteb{}{\bound{3}}\sitesep\sitec{}{1}}\agentsep\agentShc{\sitea{}{1}\sitesep\siteb{p}{}}\sep
\agentIR{\sitea{}{}\sitesep\siteb{}{}\sitesep\sited{u}{}}\revrule\agentIR{\sitea{}{}\sitesep\siteb{}{}\sitesep\sited{p}{}}\sep
\agentIR{\sitea{}{\bound{1}}\sitesep\siteb{}{}\sitesep\sited{u}{}}\revrule\agentIR{\sitea{}{\bound{1}}\sitesep\siteb{}{}\sitesep\sited{p}{}}\sep
\agentIR{\sitea{}{}\sitesep\siteb{}{\bound{1}}\sitesep\sited{u}{}}\revrule\agentIR{\sitea{}{}\sitesep\siteb{}{\bound{1}}\sitesep\sited{p}{}}\sep
\agentIR{\sitea{}{\bound{1}}\sitesep\siteb{}{\bound{2}}\sitesep\sited{u}{}}\revrule\agentIR{\sitea{}{\bound{1}}\sitesep\siteb{}{\bound{2}}\sitesep\sited{p}{}}\sep
\agentIR{\sited{p}{}}\agentsep\agentIRS{\sitea{}{}}\revrule\agentIR{\sited{p}{1}}\agentsep\agentIRS{\sitea{}{1}}\sep
\agentIR{\sitea{}{\bound{2}}\sitesep\siteb{}{\bound{3}}\sitesep\sited{}{1}}\agentsep\agentIRS{\sitea{}{1}\sitesep\siteb{u}{}}\revrule\agentIR{\sitea{}{\bound{2}}\sitesep\siteb{}{\bound{3}}\sitesep\sited{}{1}}\agentsep\agentIRS{\sitea{}{1}\sitesep\siteb{p}{}}\sep
\agentGrb{\sitea{}{}}\agentsep\agentIRS{\siteb{p}{}}\revrule\agentGrb{\sitea{}{1}}\agentsep\agentIRS{\siteb{p}{1}}\sep
\agentGrb{\siteb{}{}}\agentsep\agentSos{\sited{u}{}}\revrule\agentGrb{\siteb{}{1}}\agentsep\agentSos{\sited{u}{1}}\sep
\agentGrb{\siteb{}{}}\agentsep\agentSos{\sited{p}{}}\revrule\agentGrb{\siteb{}{1}}\agentsep\agentSos{\sited{p}{1}}\sep
\agentGrb{\siteb{}{1}}\agentsep\agentSos{\sited{u}{1}}\revrule\agentGrb{\siteb{}{1}}\agentsep\agentSos{\sited{p}{1}}\sep
\agentSos{\sited{u}{}}\revrule\agentSos{\sited{p}{}}\sep
\agentShc{\siteb{u}{}}\revrule\agentShc{\siteb{p}{}}\sep
\agentIRS{\siteb{u}{}}\revrule\agentIRS{\siteb{p}{}}\finalsep

\end{minipage}}
\end{minipage}
\caption{Rule set for the \emph{EGFR}/Insulin crosstalk.}
\label{ruleset}
\end{table}}

We roughly explain how each pathway works, by focusing on the forward direction of rules. First, we describe how \agentEGFR{} can recruit a transport molecule (\agentGrb{}). \agentEGFR{} can recruit a ligand (EGF) on site $a$ (r01,r02), and two \agentEGFR{}s can form a dimer (r03,r04,r05). We have used two rules to encode EGF/\agentEGFR{} binding, in order to model the fact that the rate of association may depend on the fact whether \agentEGFR{} is in a dimer, or not. The same way, the rate of dimer formation/dissociation may depend on the number of ligands that are bound to receptors. The site $b$ of \agentEGFR{} can be phosphorylated (r06,r07) at a rate which depends whether the receptor is in a dimer, or not. Then, \agentEGFR{} can recruit an adapter molecule called Shc (r08). Then, \agentEGFR{} can phosphorylize Shc (r09,r10) (the rate depend on the fact whether the receptor is still in a dimer, or not). Shc can then recruit a transport modecule (\agentGrb{}) (r11). Yet, each receptor  has a shorter way to recruit a transport molecule. The site $c$ of \agentEGFR{} can be phosphorylated (r12,r13), and then recruit \agentGrb{} directly (r14,r15). 

Then, we describe how an insulin receptor (\agentIR{}) can recruit \agentGrb{}. \agentIR{} can recruit insulin molecules (Ins) on two sites $a$ (r16,r17) and $b$ (r18,r19) (the rate may depend on the fact whether an insulin molecule has already been recruited).  The site $c$ of the \agentIR{} can be phosphorylated (r20,r21,r22,r23) at a rate which depends on the number of recruited insulin molecules (in practice the rates of rules r21 and r22 are the same). Then, \agentIR{} can recruit an adapter Shc (r24). Whenever \agentIR{} is also bound to two insulin molecules, Shc can be phosphorylized (r25). Shc can then recruit \agentGrb{} (r11). Yet, \agentIR{} has an other way of recruiting a \agentGrb{}. The site $d$ of \agentIR{} (r26,r27,r28,r29), and then recruit another adapter called \agentIRS{} (r30) which can be activated when the insulin receptor is bound to two insulin molecules (r31). Then, \agentIRS{} can recruit \agentGrb{} (r32).  

Independently, \agentGrb{} can recruit a protein \agentSos{} (r33,r34). And \agentSos{} can be activated (r35,r36) at the rate which may depend on the fact whether it is bound to a \agentGrb{}, or not. Other rules describe the recuitment of \agentSos{} by \agentGrb{}. And spontaneous (de)phosphorylation of Shc (r37) and \agentIR{}S (r38).

In this model, $2,768$ different complexes may occur. This number is mainly due to the fact that each dimer made of two proteins \agentEGFR{} has $4$ sites (the sites \site{b}{}{} and \site{c}{}{} for each \agentEGFR{}) to recruit a \agentGrb{}, which induces a small combinatorial blow up. Scanning the set of rules, one can notice that no rule tests both the site $a$ and $b$ of some  proteins \agentGrb{}. Thus, the partition $\{\{a\},\{b\}\}$ can be used safely for the sites of \agentGrb{}, in the annotated contact map. As a consequence, the number of fragments is only $609$. Unfortunatly, this is the only reduction that we can do (ie, the partition for the sites of any other kind of proteins, is the coarsest one).

Last, one can notice that, given some additional hypotheses on the rate of some rule, that the sites $a$ and $b$ of \agentIR{} have a symetric role in the system. We could consider this symetry to reduce the set of considered fragments, by identifying two symetric fragments, such as \agent{IR}{\site{a}{}{1}\sitesep\site{b}{}{}}\agentsep\agent{Ins}{\site{a}{}{1}} and \agent{IR}{\site{a}{}{}\sitesep\site{b}{}{1}}\agentsep\agent{Ins}{\site{a}{}{1}}.

\section{Conclusions}\label{sec:conclusion}
Reducing the complexity of combinatorial reaction mixtures is an important milestone towards simulation and analysis of large-scale realistic models of cellular signal transduction. In this paper we study a scalable reduction method, that is applicable to any rule-based specification. The reduction is sound and moreover complete, i.e. the sample traces of individual molecular species can be reconstructed from the traces of aggregated species in the reduced model. We put this method into the general context of abstractions of probabilistic transition system and show that it yields a sufficient condition for weak lumpability and that it is equivalent to backward Markov bisimulation.  The reduction factor depends on the number of independent molecular events and is strictly smaller than that of the less-demanding reduction based on the differential semantics.  

A compelling problem for future work is thus to analyze differential fragments in the context of stochastic semantics and to obtain error bounds for this reduction as a function of the kinetic parameters of the system.

{
\bibliographystyle{eptcs} 
\bibliography{ref} 

\begin{thebibliography}{10}
\providecommand{\bibitemstart}[1]{\bibitem{#1}}
\providecommand{\bibitemend}{}
\providecommand{\bibliographystart}{}
\providecommand{\bibliographyend}{}
\providecommand{\url}[1]{\texttt{#1}}
\providecommand{\urlprefix}{Available at }
\providecommand{\bibinfo}[2]{#2}
\bibliographystart

\bibitemstart{baier03}
\bibinfo{author}{Christel Baier}, \bibinfo{author}{Boudewijn Haverkort},
  \bibinfo{author}{Holger Hermanns} \& \bibinfo{author}{Joost-Pieter Katoen}
  (\bibinfo{year}{2003}): \emph{\bibinfo{title}{Model-checking algorithms for
  continuous-time {Markov} chains}}.
\newblock {\sl \bibinfo{journal}{IEEE Transactions on Software Engineering}}
  \bibinfo{volume}{29}(\bibinfo{number}{7}), p. \bibinfo{pages}{2003}.
\bibitemend

\bibitemstart{Hermanns99:weakbisimulation}
\bibinfo{author}{Christel Baier} \& \bibinfo{author}{Holger Hermanns}
  (\bibinfo{year}{1999}).
\newblock \emph{\bibinfo{title}{Weak Bisimulation for Fully Probabilistic
  Processes}}.
\bibitemend

\bibitemstart{blinov2004}
\bibinfo{author}{Michael~L. Blinov}, \bibinfo{author}{James~R. Faeder} \&
  \bibinfo{author}{William~S. Hlavacek} (\bibinfo{year}{2004}):
  \emph{\bibinfo{title}{{BioNetGen: software for rule-based modeling of signal
  transduction based on the interactions of molecular domains}}}.
\newblock {\sl \bibinfo{journal}{Bioinformatics}} \bibinfo{volume}{20}, pp.
  \bibinfo{pages}{3289--3292}.
\bibitemend

\bibitemstart{buchholz_lump}
\bibinfo{author}{Peter Buchholz} (\bibinfo{year}{1994}):
  \emph{\bibinfo{title}{Exact and Ordinary Lumpability in Finite {Markov}
  Chains}}.
\newblock {\sl \bibinfo{journal}{Journal of Applied Probability}}
  \bibinfo{volume}{31, no1}, pp. \bibinfo{pages}{59--75}.
\bibitemend

\bibitemstart{buchholz_bisimulation}
\bibinfo{author}{Peter Buchholz} (\bibinfo{year}{2008}):
  \emph{\bibinfo{title}{Bisimulation relations for weighted automata}}.
\newblock {\sl \bibinfo{journal}{Theoretical Computer Science}}
  \bibinfo{volume}{Volume 393, Issue 1-3}, pp. \bibinfo{pages}{109--123}.
\bibitemend

\bibitemstart{conzelmann2008}
\bibinfo{author}{Holger Conzelmann}, \bibinfo{author}{Dirk Fey} \&
  \bibinfo{author}{Ernst~D. Gilles} (\bibinfo{year}{2008}):
  \emph{\bibinfo{title}{Exact model reduction of combinatorial reaction}}.
\newblock {\sl \bibinfo{journal}{BMC Syst Biol}}
  \bibinfo{volume}{2}(\bibinfo{number}{78}), pp. \bibinfo{pages}{342--351}.
\bibitemend

\bibitemstart{fmsb:2008}
\bibinfo{author}{Vincent Danos}, \bibinfo{author}{J{\'er\^o}me Feret},
  \bibinfo{author}{Walter Fontana}, \bibinfo{author}{Russel Harmer} \&
  \bibinfo{author}{Jean Krivine} (\bibinfo{year}{2008}):
  \emph{\bibinfo{title}{Rule-based modelling, symmetries, refinements}}.
\newblock In: {\sl \bibinfo{booktitle}{FMSB'08}}, \bibinfo{series}{LNBI}.
\bibitemend

\bibitemstart{kappa}
\bibinfo{author}{Vincent Danos} \& \bibinfo{author}{Cosimo Laneve}
  (\bibinfo{year}{2003}): \emph{\bibinfo{title}{Core Formal Molecular
  Biology}}.
\newblock {\sl \bibinfo{journal}{Theoretical Computer Science}}
  \bibinfo{volume}{325}, pp. \bibinfo{pages}{69--110}.
\bibitemend

\bibitemstart{PP:BisimulLMP02}
\bibinfo{author}{Jos\'{e}e Desharnais}, \bibinfo{author}{Abbas Edalat} \&
  \bibinfo{author}{Prakash Panangaden} (\bibinfo{year}{2002}):
  \emph{\bibinfo{title}{Bisimulation for labelled {M}arkov processes}}.
\newblock {\sl \bibinfo{journal}{Inf. Comput.}}
  \bibinfo{volume}{179}(\bibinfo{number}{2}), pp. \bibinfo{pages}{163--193}.
\bibitemend

\bibitemstart{Doyen:EMC08}
\bibinfo{author}{Laurent Doyen}, \bibinfo{author}{Thomas~A. Henzinger} \&
  \bibinfo{author}{Jean-Francois Raskin} (\bibinfo{year}{2008}):
  \emph{\bibinfo{title}{Equivalence of Labeled {M}arkov Chains}}.
\newblock {\sl \bibinfo{journal}{Int. J. Found. Comput. Sci.}}
  \bibinfo{volume}{19}(\bibinfo{number}{3}), pp. \bibinfo{pages}{549--563}.
\newblock \urlprefix\url{http://dx.doi.org/10.1142/S0129054108005814}.
\bibitemend

\bibitemstart{pnas}
\bibinfo{author}{J\'{e}r\^{o}me Feret}, \bibinfo{author}{Vincent Danos},
  \bibinfo{author}{Jean Krivine}, \bibinfo{author}{Russ Harmer} \&
  \bibinfo{author}{Walter Fontana} (\bibinfo{year}{2009}):
  \emph{\bibinfo{title}{Internal coarse-graining of molecular systems}}.
\newblock {\sl \bibinfo{journal}{Proc. Natl. Acad. Sci. USA}}
  \bibinfo{volume}{106}(\bibinfo{number}{16}), pp. \bibinfo{pages}{6453--6458}.
\newblock \urlprefix\url{http://dx.doi.org/10.1073/pnas.0809908106}.
\bibitemend

\bibitemstart{journal}
\bibinfo{author}{Jerome Feret}, \bibinfo{author}{Heinz Koeppl} \&
  \bibinfo{author}{Tatjana Petrov}: \emph{\bibinfo{title}{Stochastic fragments:
  A framework for the exact reduction of the stochastic semantics of rule-based
  models}}.
\newblock {\sl \bibinfo{journal}{International Journal of Software and
  Informatics}} \bibinfo{note}{To appear}.
\bibitemend

\bibitemstart{unif1}
\bibinfo{author}{Donald Gross} \& \bibinfo{author}{Douglas~R. Miller}
  (\bibinfo{year}{1984}): \emph{\bibinfo{title}{The Randomization Technique as
  a Modeling Tool and Solution Procedure for Transient {Markov} Processes}}.
\newblock {\sl \bibinfo{journal}{Operations Research}} \bibinfo{volume}{32,
  no2}, pp. \bibinfo{pages}{343--361}.
\bibitemend

\bibitemstart{VerenaSW}
\bibinfo{author}{Thomas~A. Henzinger}, \bibinfo{author}{Maria Mateescu} \&
  \bibinfo{author}{Verena Wolf} (\bibinfo{year}{2009}):
  \emph{\bibinfo{title}{Sliding Window Abstraction for Infinite {Markov}
  Chains}}.
\newblock In: {\sl \bibinfo{booktitle}{CAV}}, pp. \bibinfo{pages}{337--352}.
\bibitemend

\bibitemstart{holger_PhD}
\bibinfo{author}{Holger Hermanns} (\bibinfo{year}{2002}):
  \emph{\bibinfo{title}{Interactive {Markov} Chains And the Quest for
  Quantified Quality}}.
\newblock \bibinfo{type}{Ph.D. thesis}, \bibinfo{school}{University of
  Marburg}.
\bibitemend

\bibitemstart{janePhD}
\bibinfo{author}{Jane Hillston} (\bibinfo{year}{1996}): \emph{\bibinfo{title}{A
  compositional approach to performance modelling}}.
\newblock \bibinfo{publisher}{Cambridge University Press},
  \bibinfo{address}{New York, NY, USA}.
\bibitemend

\bibitemstart{hlavacekws_2003}
\bibinfo{author}{William~S. Hlavacek}, \bibinfo{author}{James~R. Faeder},
  \bibinfo{author}{Michael~L. Blinov}, \bibinfo{author}{Alan~S. Perelson} \&
  \bibinfo{author}{Byron Goldstein} (\bibinfo{year}{2005}):
  \emph{\bibinfo{title}{The complexity of complexes in signal transduction}}.
\newblock {\sl \bibinfo{journal}{Biotechnol. Bio-eng.}} \bibinfo{volume}{84},
  pp. \bibinfo{pages}{783--794}.
\bibitemend

\bibitemstart{hlavacek2006rms}
\bibinfo{author}{William~S. Hlavacek}, \bibinfo{author}{James~R. Faeder},
  \bibinfo{author}{Michael~L. Blinov}, \bibinfo{author}{Richard~G. Posner},
  \bibinfo{author}{Michael Hucka} \& \bibinfo{author}{Walter Fontana}
  (\bibinfo{year}{2006}): \emph{\bibinfo{title}{{Rules for Modeling
  Signal-Transduction Systems}}}.
\newblock {\sl \bibinfo{journal}{Science's STKE}}
  \bibinfo{volume}{2006}(\bibinfo{number}{344}).
\bibitemend

\bibitemstart{lumpCommutCTMCExact}
\bibinfo{author}{D.~Kannan Jianjun Paul~Tian} (\bibinfo{year}{2006}):
  \emph{\bibinfo{title}{Lumpability and Commutativity of {Markov} Processes}}.
\newblock {\sl \bibinfo{journal}{Stochastic analysis and Applications}}
  \bibinfo{volume}{24, no3}, pp. \bibinfo{pages}{685--702}.
\bibitemend

\bibitemstart{Kemeny1976}
\bibinfo{author}{J.~G. Kemeny}, \bibinfo{author}{J.~L. Snell} \&
  \bibinfo{author}{A.~W. Knapp} (\bibinfo{year}{1976}):
  \emph{\bibinfo{title}{Denumerable {M}arkov Chains}}.
\newblock \bibinfo{publisher}{Springer-Verlag}, \bibinfo{address}{New York, NY,
  USA}.
\bibitemend

\bibitemstart{KS60:lump}
\bibinfo{author}{John Kemeny} \& \bibinfo{author}{James~L. Snell}
  (\bibinfo{year}{1960}): \emph{\bibinfo{title}{Finite {M}arkov Chains}}.
\newblock \bibinfo{publisher}{Van Nostrand}.
\bibitemend

\bibitemstart{kurtztg_1971_1}
\bibinfo{author}{Thomas~G. Kurtz} (\bibinfo{year}{1971}):
  \emph{\bibinfo{title}{Limit Theorems for Sequences of Jump {Markov} Processes
  Approximating Ordinary Differential Processes}}.
\newblock {\sl \bibinfo{journal}{Journal of Applied Probability}}
  \bibinfo{volume}{8}(\bibinfo{number}{2}), pp. \bibinfo{pages}{344--356}.
\bibitemend

\bibitemstart{LS89:ProbBis}
\bibinfo{author}{Kim~G. Larsen} \& \bibinfo{author}{Arne Skou}
  (\bibinfo{year}{1989}): \emph{\bibinfo{title}{Bisimulation through
  probabilistic testing (preliminary report)}}.
\newblock In: {\sl \bibinfo{booktitle}{POPL '89}}, \bibinfo{publisher}{ACM},
  \bibinfo{address}{New York, NY, USA}, pp. \bibinfo{pages}{344--352}.
\bibitemend

\bibitemstart{mcquarrieda_1967_1}
\bibinfo{author}{Donald~A. McQuarrie} (\bibinfo{year}{1967}):
  \emph{\bibinfo{title}{Stochastic approach to chemical kinetics}}.
\newblock {\sl \bibinfo{journal}{Journal of Applied Probability}}
  \bibinfo{volume}{4}(\bibinfo{number}{3}), pp. \bibinfo{pages}{413--478}.
\bibitemend

\bibitemstart{ecsb:2009}
\bibinfo{author}{Elaine Murphy}, \bibinfo{author}{Vincent Danos},
  \bibinfo{author}{Jerome Feret}, \bibinfo{author}{Russell Harmer} \&
  \bibinfo{author}{Jean Krivine} (\bibinfo{year}{2009}):
  \emph{\bibinfo{title}{Rule Based Modelling and Model Refinement}}.
\newblock In: {\sl \bibinfo{booktitle}{Elements of Computational Systems
  Biology}}, \bibinfo{publisher}{Wiley}.
\bibitemend

\bibitemstart{weakLumpDTMC}
\bibinfo{author}{Gerardo Rubino} \& \bibinfo{author}{Bruno Sericola}
  (\bibinfo{year}{1991}): \emph{\bibinfo{title}{A Finite Characterization of
  Weak Lumpable {Markov} Processes. Part {I}: The Discrete Time Case}}.
\newblock {\sl \bibinfo{journal}{Stochastic processes and their applications}}
  \bibinfo{volume}{38}, pp. \bibinfo{pages}{195--204}.
\bibitemend

\bibitemstart{weakLumpCTMC}
\bibinfo{author}{Gerardo Rubino} \& \bibinfo{author}{Bruno Sericola}
  (\bibinfo{year}{1993}): \emph{\bibinfo{title}{A Finite Characterization of
  Weak Lumpable {Markov Processes}. Part {II}: The Continuous Time Case}}.
\newblock {\sl \bibinfo{journal}{Stochastic processes and their applications}}
  \bibinfo{volume}{45}, pp. \bibinfo{pages}{115--125}.
\bibitemend

\bibitemstart{unif2}
\bibinfo{author}{Roger~B. Sidje}, \bibinfo{author}{Kevin Burrage} \&
  \bibinfo{author}{Shev MacNamara} (\bibinfo{year}{2007}):
  \emph{\bibinfo{title}{Inexact Uniformization Method for Computing Transient
  Distributions of {Markov} Chains}}.
\newblock {\sl \bibinfo{journal}{SIAM Journal on Scientific Computing}}
  \bibinfo{volume}{29, issue 6}, pp. \bibinfo{pages}{2562--2580}.
\bibitemend

\bibitemstart{Sokolova03onrelational}
\bibinfo{author}{Ana Sokolova} \& \bibinfo{author}{Erik P.~de Vink}
  (\bibinfo{year}{2003}): \emph{\bibinfo{title}{On relational properties of
  lumpability}}.
\newblock In: {\sl \bibinfo{booktitle}{Proceedings of the 4th PROGRESS}}.
\bibitemend

\bibitemstart{susanna}
\bibinfo{author}{Jeremy Sproston} \& \bibinfo{author}{Susanna Donatelli}
  (\bibinfo{year}{2004}): \emph{\bibinfo{title}{Backward Stochastic
  Bisimulation in {CSL} Model Checking}}.
\newblock In: {\sl \bibinfo{booktitle}{QEST '04}},
  \bibinfo{address}{Washington, DC, USA}, pp. \bibinfo{pages}{220--229}.
\bibitemend

\bibitemstart{walshct_2006}
\bibinfo{author}{Christopher~T. Walsh} (\bibinfo{year}{2006}):
  \emph{\bibinfo{title}{Posttranslation Modification of Proteins: Expanding
  Nature's Inventory}}.
\newblock \bibinfo{publisher}{Roberts and Co. Publisher}.
\bibitemend

\bibliographyend
\end{thebibliography}
}


\end{document}